\newtheorem{Theorem}{Theorem}
\newtheorem{Remark}{Remark}
\newtheorem{Corollary}{Corollary}
\newtheorem{Proposition}{Proposition}
\def\hlinewd#1{%
  \noalign{\ifnum0=`}\fi\hrule \@height #1 \futurelet
   \reserved@a\@xhline}
\def\Complex{{\rm\rule[.23ex]{.03em}{1.1ex}\kern-.3em{C}}}
\newcommand{\be}{\begin{equation}}
\newcommand{\ee}{\end{equation}}
\newcommand{\bea}{\begin{eqnarray}}
\newcommand{\eea}{\end{eqnarray}}
\newcommand{\benum}{\begin{enumerate}}
\newcommand{\eenum}{\end{enumerate}}
\newcommand{\qa}{{\bf a}}
\newcommand{\qg}{{\bf g}}
\newcommand{\qh}{{\bf h}}
\newcommand{\qm}{{\bf m}}
\newcommand{\qw}{{\bf w}}
\newcommand{\qx}{{\bf x}}
\newcommand{\qz}{{\bf z}}
\newcommand{\qA}{{\bf A}}
\newcommand{\qI}{{\bf I}}
\newcommand{\qP}{{\bf P}}
\newcommand{\qQ}{{\bf Q}}
\newcommand{\qR}{{\bf R}}
\newcommand{\qW}{{\bf W}}
\newcommand{\qPhi}{{\boldsymbol \Phi}}
\newcommand{\bbC}{{\mathbb C}}
\newcommand{\calL}{{\mathcal L}}
\newcommand{\tr}{{\sf tr}}
\newcommand{\Ex}{{\sf E}}
\newcommand{\varx}{{\sf var}}
\newcommand{\argmin}{\operatornamewithlimits{arg\, min}}
\newcommand{\tu}{\tau_{u}}
\begin{document}

\title{Energy Efficient Downlink Transmission for Multi-cell Massive DAS with Pilot Contamination}

\author{Jun Zuo,
        Jun Zhang,
        ~\IEEEmembership{Member,~IEEE,}
        Chau Yuen,
        ~\IEEEmembership{Senior Member,~IEEE,}\\
        Wei Jiang,
        ~\IEEEmembership{Member,~IEEE,}
        and Wu Luo
        ~\IEEEmembership{Member,~IEEE}
\thanks{Manuscript received November 18, 2015; revised February 29, 2016 and April 4, 2016; accepted April 7, 2016. The work of J. Zuo, W. Jiang, and W. Luo is supported by the National Natural Science Foundation of China under Grant 61171080. The work of C. Yuen is supported in part by Singapore A*STAR SERC Project under Grant 142 02 00043 and the National Natural Science Foundation of China under Grant 61550110244. The work of J. Zhang is supported in part by the Natural Science Foundation through the Jiangsu Higher Education Institutions of China under Grant 15KJB510025, the Natural Science Foundation Program through Jiangsu Province of China under Grant BK20150852, and Jiangsu Planned Projects for Postdoctoral Research Funds under Grant 1501018A. The review of this paper was coordinated by Dr. Tomohiko Taniguchi.}
\thanks{Copyright (c) 2016 IEEE. Personal use of this material is permitted. However, permission to use this material for any other purposes must be obtained from the IEEE by sending a request to pubs-permissions@ieee.org.}
\thanks{J. Zuo, W. Jiang, and W. Luo are with the State Key Laboratory of Advanced Optical Communication Systems and Networks, Peking University, 100871, China, E-mail: \{zuojun, jiangwei, luow\}@pku.edu.cn.}
\thanks{J. Zhang is with Jiangsu Key Laboratory of Wireless Communications, Nanjing University of Posts and Telecommunications, Nanjing 210003, China, E-mail: zhangjun@njupt.edu.cn.}
\thanks{C. Yuen is with Singapore University of Technology and Design, Singapore, E-mail: yuenchau@sutd.edu.sg.}
\thanks{*The corresponding author is W. Jiang.}
}
\markboth{IEEE TRANSACTIONS ON VEHICULAR TECHNOLOGY ,~Vol.~$\times$, No.~$\times$, $\times\times$~2016} {Zuo \MakeLowercase{\textit{et al.}}: Multi-Cell Multi-User Massive MIMO Transmission with Downlink Training and Pilot Contamination Precoding}

\maketitle

\begin{abstract}
In this paper, we study the energy efficiency (EE) of a downlink multi-cell massive distributed antenna system (DAS) in the presence of pilot contamination (PC), where the antennas are clustered on the remote radio heads (RRHs). We employ a practical power consumption model by considering the transmit power, the circuit power, and the backhaul power, in contrast to most of the existing works which focus on co-located antenna systems (CAS) where the backhaul power is negligible. For a given average user rate, we consider the problem of maximizing the EE with respect to the number of each RRH antennas $n$, the number of RRHs $M$, the number of users $K$, and study the impact of system parameters on the optimal $n$, $M$ and $K$. Specifically, by applying random matrix theory, we derive the closed-form expressions of the optimal $n$, and find the solution of the optimal $M$ and $K$, under a simplified channel model with maximum ratio transmission. From the results, we find that to achieve the optimal EE, a large number of antennas is needed for a given user rate and PC.
As the number of users increases, EE can be improved further by having more RRHs and antennas.
Moreover, if the backhauling power is not large, massive DAS can be more energy efficient than massive CAS. These insights provide a useful guide to practical deployment of massive DAS.
\end{abstract}

\begin{IEEEkeywords}
Massive MIMO, multi-cell, distributed antenna system (DAS), pilot contamination (PC), energy efficiency (EE).
\end{IEEEkeywords}

\section{Introduction}
With the rapid deployment of wireless communication systems, energy efficiency (EE) becomes a key concern from the viewpoint of green communication \cite{Tombaz-11WCM,Andrews-14JSAC}. Recently, massive multiple-input multiple-output (MIMO) systems, where a large number of antennas are deployed at the base station (BS), have attracted a great deal of research interest \cite{Marzetta-10TWC,Rusek-13SPM,Suraweera-13ICC,Hoydis-13JSAC,ZhangJun-13JSAC,Larsson-14ComM,Lu-14JSTSP,Sanguinetti-15JSAC,Sadeghi-15GC,ZhangJun-15TWC}. Massive MIMO is acknowledged as a promising technology to improve both the spectral efficiency (SE) and EE with the advantages of asymptotically negligible fast fading, noise free channels, and arbitrarily small transmit power \cite{Yang-13JSAC,Gao-15CL,Chen-15TWC}. The major bottleneck of improving the SE in massive MIMO is the so-called pilot contamination (PC) effect, which is caused by using the non-orthogonal uplink pilot sequences at different users \cite{Marzetta-10TWC,Fernandes-13JSAC}. On the other hand, distributed antenna systems (DAS), where antennas of the interested cell can either be fully distributed within the cell \cite{Liu-14TSP,Joung-14JSTSP,LinYicheng-14TWC} or clustered at remote radio heads (RRHs) \cite{Truong-13AC,Onireti-13TCOM,Wangdongming-13JSAC,Sun-15IET}, is proven to be efficient to improve the EE and coverage by shortening the average distance between the transmitters and users, and thus lowering the transmit power \cite{Heath-11TSP,Dailin-14TWC}. It is expected that combining DAS with massive MIMO by scaling up the number of antennas in DAS, i.e., massive DAS, can further enhance the system performance \cite{Truong-13AC,Liu-14TSP}.

The EE analysis and optimization problems in massive MIMO systems have been recently considered in \cite{Ngo-13TCOM,Liu-15arXiv,EmilB-15TWC,Yang-15VTC,Mohammed-14TC,Joung-14JSTSP,Hechunglong-12VTC}. For the massive co-located antenna systems (CAS), the power scaling law and trade-off between EE and SE for uplink transmission were analyzed in \cite{Ngo-13TCOM}, where only the transmit power was considered when evaluating the EE. In \cite{Liu-15arXiv}, the authors investigated the EE of downlink multi-cell massive CAS by optimizing the transmit power for given numbers of BS antennas and users. Focusing on zero forcing (ZF) processing in single-cell systems with perfect channel state information (CSI) at the BS, an EE optimization problem was discussed in \cite{EmilB-15TWC} to find the optimal numbers of BS antennas, users, and transmit power. The authors of \cite{Yang-15VTC} optimized the number of BS antennas to maximize EE when PC was negligible, and provided the explicit formulas of the optimal number of BS antennas in single-cell case. The impact of transceiver power consumption on the EE of the ZF detector in the uplink single-cell massive CAS was discussed in \cite{Mohammed-14TC}.

For the DAS, in \cite{Joung-14JSTSP}, the design of precoding matrix, antenna selection matrix, and power control matrix to optimize the EE in single-cell downlink massive DAS was studied.
A comparative EE study of uplink transmission between DAS and CAS was considered in \cite{Hechunglong-12VTC} under a power consumption model considering transmit power and circuit power, and revealed that DAS can improve the EE when compared to CAS.

However, most of these works only focused on the single-cell scenario for analytical tractability. To the best of the authors' knowledge, there is limited study analyzing the EE in multi-cell massive DAS and taking into account the impact of PC.
To this end, we take into account PC and investigate the EE in the downlink \emph{multi-cell} massive DAS, where the antennas are clustered at RRHs.
Moreover, the power consumption model is important when evaluating the EE. In this paper, we adopt a power model where the transmit power, the circuit power, and the backhaul power are considered \cite{Joung-14JSTSP,Onireti-13TCOM,Hechunglong-12VTC,Ngo-13TCOM}. The comparison among our work and previous work are listed in Table~\ref{Related Work}, where ``UL" and ``DL" denote uplink and downlink, respectively.
\begin{table*}[!t]
\centering
\caption{Comparison of Related Work of EE in Massive MIMO} \label{Related Work}
\begin{adjustwidth}{0cm}{-0.5cm}
\begin{tabular}{|m{.08\linewidth}<{\centering}| m{.11\linewidth}<{\centering}| m{.07\linewidth}<{\centering}| m{.07\linewidth}<{\centering}| m{.04\linewidth}<{\centering}| m{.44\linewidth}<{\centering}|}
 \hline Work & CAS/DAS & Cell & UL/DL & PC & Main Contribution\\
\hline \cite{Ngo-13TCOM} & CAS &
 Single\& Multi
 & UL &  ${\surd}$ & Study the power scaling law and trade-off between EE and SE\\
\hline \cite{Liu-15arXiv} & CAS & Multi\ & DL & ${\surd}$ & Optimize the transmit power \\
\hline \cite{EmilB-15TWC} & CAS & Single & UL\&DL &${\times}$ & Optimize the numbers of BS antennas, users, and the transmit power \\
\hline \cite{Yang-15VTC} & CAS & Multi & DL & ${\times}$ & Optimize the number of BS antennas\\
\hline \cite{Mohammed-14TC}& CAS & Single & UL & ${\times}$ & Study the impact of transceiver power consumption on the EE \\
\hline \cite{Joung-14JSTSP} & DAS & Single & DL & ${\times}$ & Design the precoding matrix, antenna selection matrix, and power control matrix\\
\hline \cite{Hechunglong-12VTC}& CAS\&DAS & Single &UL &${\times}$ & Compare the EE between DAS and CAS\\
\hline Proposed& DAS & Multi & DL & ${\surd}$ & Optimize the antenna number of each RRH, the numbers of RRHs and users \\
\hline
 \end{tabular}
\end{adjustwidth}
\end{table*}

In particular, we are interested in the following problems. For a given average uniform rate, to achieve optimal EE, how many antennas should be employed by each RRH? How many RRHs should be deployed? What is the optimal number of users? And how the optimal numbers are affected by different parameters, including the channel correlation, the channel gain, the power consumption parameters, and the PC? Per-user power optimization is an important issue in EE maximization problem. Here, this issue is not involved so as to study the effects of the number of antennas, RRHs, and users on EE in a standalone manner and draw basic insights. The discussions on EE optimization of per-user power can be found in \cite{Zhao-13VTC,Chien-16arXiv,Li-14WCNC}.
The EE optimization problem in general are difficult problems when taking into account the imperfect CSI at the RRHs and the effect of multi-cell PC, which makes it difficult to analyze. To solve the problems, we first use random matrix theory to reduce random channel gains to deterministic statistical information \cite{Wagner-12IT,Hoydis-13JSAC,ZhangJun-13TWC}. Second, we consider a simplified channel model to facilitate the analysis. By doing so, a closed-form expression on the optimal antenna number of each RRH is derived, the form of solution for the optimal number of users is given, and finally the optimal number of RRHs is obtained through one-dimensional search.
From the results, we find that to achieve the optimal EE, a large number of antennas is needed for a given user rate and PC.
As the number of users increases, EE can be improved further by having more RRHs and antennas.
Moreover, if the backhauling power is not large, massive DAS can be more energy efficient than massive CAS. These insights provide a useful guide to practical deployment of massive DAS.

The rest of the paper is organized as follows. The system model and power consumption model are described in Section II. In Section III, the asymptotic EE is derived, and this is then used in Section IV to obtain the optimal antenna number of each RRH, the optimal number of RRHs, and the optimal number of users that maximize the EE. We then analyze how these optimal numbers are affected by other system parameters. Simulation results are presented in Section V to validate the analysis, followed by conclusions in Section VI.

\emph{Notation:} Boldface uppercase and lowercase letters denote matrices and vectors, respectively. An $N \times N$ identity matrix is denoted by ${\bf{I}}_N$, while an all-zero matrix is denoted by ${\bf 0}$, and an all-one matrix by ${\bf 1}$. The superscripts $(\cdot)^{H}$, $(\cdot)^{T}$, and $(\cdot)^{*}$ stand for the conjugate-transpose, transpose, and conjugate operations, respectively. $\Ex\{\cdot\}$ means the expectation operator, and $\varx \{\cdot \}$ denotes the variance. We use $\tr\{\qA\}$ to denote the trace of matrix $\qA$ and diag\{$\qa$\} to denote a diagonal matrix with vector $\qa$ along its main diagonal. The notation $| \cdot|$ and $\|\cdot\|$ denote the absolute value of a variable and the two-norm of a matrix, respectively. $\qx\sim \mathcal{CN}\left( \qm, \qQ \right)$ defines a vector of jointly circularly symmetric complex Gaussian random variables with mean value $\qm$ and covariance matrix $\qQ$.

\section{System Model and Power Consumption Model}
\subsection{System Model}
Consider the downlink of a cellular network with $L$ non-coordinated cells, where each cell consists of $M$ RRHs and $K$ randomly distributed single-antenna users. The RRHs and users in cell $l$ are labeled as ${\sf RRH}_{l,1}, \dots, {\sf RRH}_{l,M}$ and ${\sf UE}_{l,1}, \dots, {\sf UE}_{l,K}$, respectively. $N$ $(N \gg K)$ antennas in a cell are evenly divided among RRHs, such that each RRH equips $n=N/M$ antennas. The $M$ RRHs in the same cell are connected to a baseband processing unit (BPU), where the main operations, including data processing and management processing are implemented. The system works in time-division duplexing (TDD) mode so that the channels between uplink and downlink are reciprocity. An example of 7-RRH massive DAS is shown in Fig.~\ref{systemmodel}, in each cell, there is one RRH in the cell center and six RRHs uniformly spaced on a circle of distance 2/3  radius away from the cell center.
\begin{figure}
\centering
\includegraphics[width=0.5\textwidth]{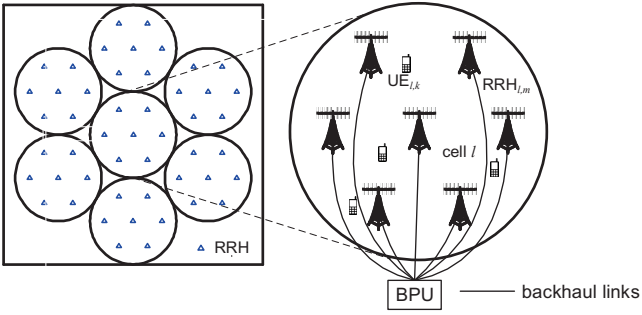}
\caption{{System model of multi-cell massive DAS.}}\label{systemmodel}
\end{figure}

The channel between ${\sf RRH}_{l,m}$ and ${\sf UE}_{j,k}$ is expressed as
\begin{equation}\label{eq:channel model}
\qg_{lmjk}=\qR_{lmjk}^{1/2}\qh_{lmjk},
\end{equation}
where $\qh_{lmjk}  \in \mathbb C^n$ is the small-scale fading channel vector, whose elements are independent and identically
distributed (i.i.d.) complex random variables with zero-mean and unit variance, and $\qR_{lmjk}= \Ex\{ \qg_{lmjk}\qg_{lmjk}^H\} \in \mathbb C^{n \times n}$ describes the spatial correlation and large-scale fading of the channel, which is a deterministic nonnegative definite matrix. $\qg_{ljk}=[\qg_{l1jk}^T, \qg_{l2jk}^T, \dots, \qg_{lMjk}^T]^T \in \mathbb C^{N}$ is the channel vector between all the $M$ RRHs in cell $l$ and ${\sf UE}_{j,k}$.

During uplink pilot transmission phase, all users simultaneously transmit pilot sequences with length $\tau_u=\psi K$ and power $p_u$, where $\psi \ (\psi \geq 1)$ is the pilot reuse factor. We assume that the pilot sequences of users in the same cell are pairwisely orthogonal, and the pilot reuse in different cells are indicated by $\psi$. For instance, $\psi=L$ allows assigning all cells orthogonal pilot sequences, where the PC is absent, and $\psi=1$ means the worst case scenario of PC, where every cell reuses the same set of pilot sequences. If $\calL_j$ is the set of cells sharing the same set of pilot sequence as cell $j$, then the number of users sharing the same pilot sequence as ${\sf UE}_{j,k}$ is $L/\psi$. Given the statistical knowledge of the channel, i.e., $\qR_{jmjk}$ and $\qQ_{jmjk}$, the MMSE estimate of $\qg_{jmjk}$ at the BPU in cell $j$ can be expressed as \cite{Kay-93,Hoydis-13JSAC,Marzetta-06ACSSC}
\begin{align}\label{eq:channel estimate}
&\hat{\qg}_{jmjk} \nonumber\\
=&\qR_{jmjk}\qQ_{jmjk}\bigg(\qg_{jmjk}+\sum_{l \in \calL_j \backslash \{j\}}\qg_{jmlk}+\frac{1}{\sqrt{p_u \tau_u}}\qz_{jmk}\bigg),
\end{align}
where $\qz_{jmk} \sim \mathcal{CN} (0, \sigma^2 \qI_n)$ denotes the Gaussian noise, and $\qQ_{jmjk}= \big( \frac{\sigma^2}{p_u \tu} \qI_{n} + \sum\limits_{l \in \calL_j} \qR_{jmlk}\big)^{-1}$.
From \eqref{eq:channel estimate}, it can be verified that $\hat \qg_{jmjk} \sim \mathcal{CN} ({\bf 0}, \qPhi_{jmjk})$ with $\qPhi_{jmjk}=\qR_{jmjk} \qQ_{jmjk} \qR_{jmjk}$ \cite{Hoydis-13JSAC}. The second term of the right-hand side of \eqref{eq:channel estimate} represents the PC from other cells.

For downlink data transmission, we assume that all the $M$ RRHs in each cell jointly serve the $K$ users within the cell. The downlink signal received by ${\sf UE}_{j,k}$ is given by
\begin{equation}\label{eq:received signal}
  y_{jk}=\sqrt{p_d}\sum_{l=1}^L \sum_{m=1}^M \qg_{lmjk}^T \qx_{lm} + z_{jk},
\end{equation}
where $p_d$ is the transmit power, $z_{jk} \sim \mathcal{CN} (0, \sigma^2)$ is the noise, and $\qx_{lm} \in \mathbb C^n$ is the transmit signal of ${\sf RRH}_{l,m}$, which can be expressed as
\begin{equation}
  \qx_{lm}=\sqrt{\lambda_{l}}\sum_{i=1}^K \qw_{lmi} s_{lmi},
\end{equation}
where $\qw_{lmi} \in \bbC^{n}$ is the precoding vector for ${\sf UE}_{l,i}$, $\lambda_{l}$ normalizes the transmit power in cell $l$ so that $\Ex \big\{ \frac{p_d}{K} \sum_{m=1}^M \qx_{lm}^H \qx_{lm}\big\}=p_d$, and $s_{lmi}$ is the information-bearing signal with $\Ex \left\{s_{lmi} s_{lmi}^*\right\}=1$.

We adopt the same assumption as in \cite{Hoydis-13JSAC,Jose-11TWC} that the channel estimates are available at the BSs or the BPUs, and only the statistical properties of the channel $\Ex \{\qg_{jmjk}^T \qw_{jmk}\}$, $m=1,2,\dots,M$, are known at the UEs for detecting its desired signal. Therefore, the received signal in \eqref{eq:received signal} can be rewritten as
\begin{align}\label{eq: rewritten the received signal}
y_{jk}=&\sqrt{p_d \lambda_j}\sum_{m=1}^M \Ex \left\{\qg_{jmjk}^T \qw_{jmk} \right\}s_{jmk} \nonumber\\
       &+\sqrt{p_d \lambda_j} \sum_{m=1}^M\big( \qg_{jmjk}^T \qw_{jmk}
       -\Ex \left\{ \qg_{jmjk}^T \qw_{jmk} \right\}\big) s_{jmk}\nonumber\\
       &+ \sqrt{p_d \lambda_j} \sum_{i \neq k}\sum_{m=1}^M \qg_{jmjk}^T \qw_{jmi} s_{jmi} \nonumber\\
       &+ \sum_{l \neq j} \sqrt{p_d \lambda_l}\sum_{i=1}^K \sum_{m=1}^M \qg_{lmjk}^T \qw_{lmi} s_{lmi} + z_{jk}.
\end{align}
In \eqref{eq: rewritten the received signal}, the first term is the desired signal, and other terms can be treated as the effective noise. The signal-to-interference-plus-noise ratio (SINR) can be given by
\begin{equation}\label{eq:SINR}
  \mbox{SINR}_{jk}=\frac{\lambda_j \left|\sum\limits_{m=1}^M \Ex \left\{\qg_{jmjk}^T \qw_{jmk} \right\}\right|^2}{\lambda_j \varx \left\{\sum\limits_{m=1}^M \qg_{jmjk}^T \qw_{jmk} \right\}+\mbox{SCI}_{jk} + \mbox{ICI}_{jk} + \frac{\sigma^2}{p_d}},
\end{equation}
where the interference from users in the same cell (SCI) and the inter-cell interference (ICI) are, respectively, given by
\begin{subequations}
\begin{align}
\mbox{SCI}_{jk}&=\lambda_j \sum_{i \neq k} \Ex\left\{\left|\sum_{m=1}^M \qg_{jmjk}^T \qw_{jmi}\right|^2\right\}, \\
\mbox{ICI}_{jk}&=\sum_{l \neq j} \sum_{i=1}^K \lambda_l \Ex\left\{\left|\sum_{m=1}^M \qg_{lmjk}^T \qw_{lmi}\right|^2\right\}.
\end{align}
\end{subequations}
As shown in \cite{Hoydis-13JSAC,Jose-11TWC}, the downlink SE of cell $j$ can be expressed as
\begin{equation}\label{achievable rate}
R_j=\frac{T-\tu}{T}\sum_{k=1}^K \log_2 \left( 1+ \mbox{SINR}_{jk} \right)\ (\text {in bits/s/Hz}),
\end{equation}
where $T$ is the channel coherence interval in symbols.

\subsection{Practical Power Consumption Model}
It is necessary to use a practical power consumption model for evaluating the EE accurately. Based on \cite{EmilB-15TWC,Onireti-13TCOM}, the total power consumed for the downlink transmission of a given cell can be modeled as the sum of a fixed power part, the circuit power, the transmit power, and backhaul inducing power:
\begin{equation}\label{eq:P_total}
  P_{\text{Total}}=P_{\text{FIX}}+N P_{\text{RRH}}+\frac{T-\tau_u}{T} \frac{p_d}{\zeta} K+P_{\text{BH}},
\end{equation}
where $P_{\text{FIX}}$ accounts for the static circuit power consumption, $P_{\text{RRH}}$ is the power required to run the internal RF components of each RRH antenna, $p_d$ is the average transmit power normalized to users, $\zeta$ is the amplifier efficiency, and $P_{\text{BH}}$ is the power consumed by backhaul links.

The backhaul inducing power in DAS might be significant since all RRHs are connected to their BPUs through high-speed backhaul links such as optical fiber. However, in CAS, the power consumption of backhaul is much less because the data processing can be done in the BS that is close to the antennas.
In massive DAS, the power consumption of backhaul for connecting $M$ RRHs to BPU is modeled as \cite{Onireti-13TCOM,EmilB-15TWC}
\begin{equation}
   P_{\text{BH}}^{\text{DAS}}= M (P_0 +  R B P_{\text{BT}}),
\end{equation}
where $P_0$ is a fixed power consumption of each backhaul, $R$ is the spectral efficiency (in bits/s/Hz), $B$ is the system bandwidth, and $P_{\text{BT}}$ is the traffic dependent power (in Watt per bit/second).

Given the system model and the power consumption model, we will adopt maximum-ratio transmission (MRT) as an example to analyze the EE in the following section. Our analysis and design are
also applicable when other beamforming strategies are adopted by RRHs.

\section{Asymptotic Energy Efficiency}
In this section, we first derive the deterministic expressions of the asymptotic SE and EE. The derivations are based on the assumption that the number of RRHs $M$ is finite, while the antenna number of each RRH $n$ and the number of users $K$ approach to infinity at a fixed ratio $n/K$. Since the derived deterministic expressions are accurate even in non-asymptotic regime, we can use them for EE optimization in practical case, which will be shown in Section IV.

If MRT beamforming is adopted in transmission, the precoding vector is given by
\begin{equation}
\qw_{lmi}={\hat{\qg}_{lmli}}^*.
\end{equation}

In \cite[Theorem 4]{Hoydis-13JSAC}, the deterministic approximations of SINR with MRT beamforming of co-located multi-cell massive MIMO system has been derived. However, the distributed massive MIMO system under considered is a more general scenario.
To derive the deterministic equivalent of SINR, we make the following assumptions:
\begin{itemize}
\item The spectral norm of $\qR_{lmjk}$, $\forall l,m,j,k$, is uniformly bounded with respect to $n$.
\item The trace of $\qR_{lmjk}$, $\forall l,m,j,k$, scales linearly with $n$.
\item The channel estimate $\hat{\qg}_{jmjk}$, the estimate error $\tilde{\qg}_{jmjk}$, and the noise $\qz_{jmk}$, $\forall j,m,k$, are mutually independent.
\end{itemize}

\begin{Proposition}\label{Proposition:SINR}
As $n,K \rightarrow \infty$, user's SINR is approximated by a deterministic equivalent such that
\begin{equation}
 \mbox{SINR}_{jk}-\overline {\mbox{SINR}}_{jk} \xrightarrow{a.s.} 0,
\end{equation}
where $\overline {\mbox{SINR}}_{jk}$ is given in by \eqref{eq:SINR_f_RMT}, shown at the top of next page,
\begin{figure*}[!t]
\begin{equation}\label{eq:SINR_f_RMT}
\overline {\mbox{SINR}}_{jk}=\frac{\bar \lambda_j \left( \frac{1}{n} \sum\limits_{m=1}^M \tr \qPhi_{jmjk}\right)^2}{\sum\limits_{l \in \calL_j \backslash \{j\}} \bar \lambda_l \left| \frac{1}{n} \sum\limits_{m=1}^M \tr \qPhi_{lmjk}\right|^2 + \frac{1}{n}\sum\limits_{l=1}^L \sum\limits_{m=1}^M \sum\limits_{i=1}^K \bar \lambda_{l}\frac{1}{n} \tr \qR_{lmjk} \qPhi_{lmli} +\frac{\sigma^2}{p_d n}}.
\end{equation}
\hrulefill
\end{figure*}
with $\bar \lambda_l=\big(\frac{1}{K} \sum_{i=1}^{K} \frac{1}{n} \sum_{m=1}^M \tr \qPhi_{lmli}\big)^{-1}$, and the notation ``$\xrightarrow{a.s.}$'' denotes the almost sure (a.s.) convergence.
\end{Proposition}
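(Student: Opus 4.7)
The plan is to evaluate the numerator, variance term, SCI, ICI, and the power normalization $\lambda_l$ of \eqref{eq:SINR} one by one, and show that each converges almost surely to the corresponding deterministic expression appearing in \eqref{eq:SINR_f_RMT}. The scaffolding is the random-matrix toolkit developed in \cite{Hoydis-13JSAC,Wagner-12IT}: the trace lemma (quadratic forms $\mathbf{x}^H\mathbf{A}\mathbf{x}-\tfrac{1}{n}\operatorname{tr}\mathbf{A}\to 0$ a.s.\ for i.i.d.\ $\mathbf{x}$ and deterministic $\mathbf{A}$ with bounded spectral norm), the matrix inversion lemma, and rank-$1$ perturbation bounds. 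The assumptions listed just before the proposition (bounded spectral norm and linear trace growth of $\mathbf{R}_{lmjk}$, and independence of $\hat{\mathbf{g}}$, $\tilde{\mathbf{g}}$, $\mathbf{z}$) are precisely what is required to invoke these tools; the orthogonality in the MMSE decomposition $\mathbf{g}_{jmjk}=\hat{\mathbf{g}}_{jmjk}+\tilde{\mathbf{g}}_{jmjk}$ with $\hat{\mathbf{g}}_{jmjk}\sim\mathcal{CN}(\mathbf{0},\mathbf{\Phi}_{jmjk})$ independent of $\tilde{\mathbf{g}}_{jmjk}$ is used throughout.

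First I would compute the desired-signal factor. With MRT, $\mathbf{w}_{jmk}=\hat{\mathbf{g}}_{jmjk}^{\ast}$, so by the MMSE orthogonality,
\begin{equation}
\mathbb{E}\{\mathbf{g}_{jmjk}^{T}\mathbf{w}_{jmk}\}=\mathbb{E}\{\|\hat{\mathbf{g}}_{jmjk}\|^{2}\}=\operatorname{tr}\mathbf{\Phi}_{jmjk},
\end{equation}
which after dividing by $n$ and summing over $m$ gives the numerator of \eqref{eq:SINR_f_RMT}. For the power normalization, $\mathbb{E}\{\tfrac{1}{K}\sum_{m}\mathbf{x}_{lm}^{H}\mathbf{x}_{lm}\}=\lambda_{l}\tfrac{1}{K}\sum_{i}\sum_{m}\operatorname{tr}\mathbf{\Phi}_{lmli}$, which pins down $\bar\lambda_{l}$. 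Next I would handle the variance-of-desired-signal term and the intra-cell interference SCI: for $i\neq k$, $\mathbf{g}_{jmjk}$ is independent of $\hat{\mathbf{g}}_{jmji}$, so the trace lemma yields $|\mathbf{g}_{jmjk}^{T}\hat{\mathbf{g}}_{jmji}^{\ast}|^{2}/n - \tfrac{1}{n}\operatorname{tr}\mathbf{R}_{jmjk}\mathbf{\Phi}_{jmji}\to 0$ a.s., and a cross-RRH independence argument shows the cross-$m$ terms vanish after normalization; the variance of the desired signal is treated identically. These contribute to the incoherent sum $\tfrac{1}{n}\sum_{l}\sum_{m}\sum_{i}\bar\lambda_{l}\tfrac{1}{n}\operatorname{tr}\mathbf{R}_{lmjk}\mathbf{\Phi}_{lmli}$.

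The delicate step, and the main obstacle, is the ICI arising from cells $l\in\mathcal{L}_{j}\setminus\{j\}$ that share pilots with cell $j$. There, $\hat{\mathbf{g}}_{lmlk}$ is correlated with $\mathbf{g}_{lmjk}$ through the common pilot observation in \eqref{eq:channel estimate}: writing $\hat{\mathbf{g}}_{lmlk}=\mathbf{R}_{lmlk}\mathbf{Q}_{lmlk}(\mathbf{g}_{lmlk}+\mathbf{g}_{lmjk}+\sum_{l'\in\mathcal{L}_{l}\setminus\{l,j\}}\mathbf{g}_{lml'k}+\tfrac{1}{\sqrt{p_{u}\tau_{u}}}\mathbf{z}_{lmk})$ and plugging into $\mathbf{g}_{lmjk}^{T}\hat{\mathbf{g}}_{lmlk}^{\ast}$, only the $\mathbf{g}_{lmjk}$ contribution yields a deterministic bias surviving $\mathbb{E}$ and normalization, producing a term $\tfrac{1}{n}\operatorname{tr}(\mathbf{R}_{lmjk}\mathbf{Q}_{lmlk}\mathbf{R}_{lmlk})$. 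Crucially, with the appropriate assumption on pilot-reuse large-scale symmetry one recognizes $\mathbf{R}_{lmjk}\mathbf{Q}_{lmlk}\mathbf{R}_{lmjk}=\mathbf{\Phi}_{lmjk}$ (or equivalently, one keeps the expression $\tfrac{1}{n}\operatorname{tr}\mathbf{\Phi}_{lmjk}$ exactly as in \eqref{eq:SINR_f_RMT}), and this term is \emph{coherent} across $m$: after summing and squaring one gets $|\tfrac{1}{n}\sum_{m}\operatorname{tr}\mathbf{\Phi}_{lmjk}|^{2}$, not the trace of the product. All other contributions in the PC term involve at least one independent channel factor and are absorbed into the incoherent sum by the same trace-lemma argument. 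Combining the five pieces with a union bound over $l,m,i$ (finite, since $M,L$ are finite and $K$ only grows in the denominator normalization) gives the a.s.\ convergence claimed, and formal $\to 0$ a.s.\ of $\mathrm{SINR}_{jk}-\overline{\mathrm{SINR}}_{jk}$ follows by the continuous mapping theorem applied to the ratio.
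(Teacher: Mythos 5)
Your proposal is correct and follows essentially the same route as the paper, which only sketches the argument (divide each term of $\mbox{SINR}_{jk}$ by $n$, apply the trace lemma term by term, and defer the details to the proof of Theorem~4 in \cite{Hoydis-13JSAC}). Your term-by-term treatment — in particular isolating the coherent pilot-contamination bias that survives the sum over $m$ and squares into $\big|\frac{1}{n}\sum_m \tr\,\qPhi_{lmjk}\big|^2$ while everything else collapses into the incoherent $\frac{1}{n^2}\tr\,\qR_{lmjk}\qPhi_{lmli}$ sum — is exactly the content of that reference and is, if anything, more explicit than the paper's own sketch.
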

\begin{sketch}
Dividing the denominator and numerator of $\mbox{SINR}_{jk}$ by $\frac{1}{n}$, we obtain the asymptotic results of each item in $\mbox{SINR}_{jk}$ as follows:
$\lambda_j \big|\sum\limits_{m=1}^M \Ex \big\{\qg_{jmjk}^T \qw_{jmk} \big\}\big|^2 \xrightarrow[n \rightarrow \infty]{a.s.} \lambda_j \big( \frac{1}{n} \sum\limits_{m=1}^M \tr \qPhi_{jmjk}\big)^2$,
$\mbox{SCI}_{jk} + \mbox{ICI}_{jk} \xrightarrow[n \rightarrow \infty]{a.s.} \sum\limits_{l \in \calL_j \backslash \{j\}} \bar \lambda_l \left| \frac{1}{n} \sum\limits_{m=1}^M \tr \qPhi_{lmjk}\right|^2 + \frac{1}{n}\sum\limits_{l=1}^L \sum\limits_{m=1}^M \sum\limits_{i=1}^K \bar \lambda_{l}\frac{1}{n} \tr \qR_{lmjk} \qPhi_{lmli}$,
and \\ $\frac{1}{N} p_{j,k} \lambda_j \varx \big\{\sum\limits_{m=1}^M \qh_{jmjk}^T \qw_{jmk} \big\} \xrightarrow[n \rightarrow \infty]{a.s.} 0$.
For the detailed proof of this proposition, please refer to the proof of \cite[Theorem 4]{Hoydis-13JSAC}.
\end{sketch}

The downlink EE of cell $j$ is defined as the downlink SE divided by the total power consumed in downlink transmission of cell $j$:
\begin{equation}\label{achievable energy efficiency}
 \eta_j\triangleq \frac{B R_j} {P_{\text{Total}}(R_j )} \ (\text {in bits/Joule}).
\end{equation}

Proposition~\ref{Proposition:SINR} indicates that user's SINR can be approximated by its deterministic equivalent without the needs of knowing the instantaneous channel. Based on continuous mapping theorem, we have the following almost sure convergence \cite{ZhangJun-13TWC}
\begin{equation}\label{eq:mapping theorem}
\eta_j-\overline \eta_j \xrightarrow{a.s.} 0,
\end{equation}
where $ \overline \eta_j=\frac{B \overline R_j}{P_{\text{Total}}(\overline R_j)}$, and $ \overline R_j=\frac{T-\tau_u}{T}\sum\limits_{k=1}^K \log (1+ \overline {\mbox{SINR}}_{jk})$.

In practice, the large-scale fading factors or the attenuation factors between different users and RRHs are not the same, however, this makes it very difficult (if not impossible) to investigate the EE and obtain basic insights. To tackle this issue, we consider a simplified channel model used in \cite{Hoydis-13JSAC,Ngo-13TCOM,Liu-15arXiv,Mohammed-14TC}, which is given by
\begin{equation}\label{eq:simplified model}
\qg_{lmjk}=\sqrt{\beta_{lmjk} \frac{n}{P}} \qA \tilde \qh_{lmjk}.
\end{equation}
The channel model in \eqref{eq:simplified model} is a particular physical channel model of \eqref{eq:channel model}. For large antenna systems, due to either insufficient antenna spacing or a lack of scattering, the channel correlation matrix $\qR_{lmjk}$ may not have full rank \cite{Ngo-11ICASSP}. The model in \eqref{eq:simplified model} is obtained by letting $\qR_{lmjk}^{1/2}=\sqrt{\beta_{lmjk} \frac{n}{P}} [\qA \ {\bf 0}_{n \times (n-P)}]$, where $\beta_{lmjk}$ is the large-scale fading factor, $\qA \in \mathbb C^{n \times P}$ is the array steering matrix\cite{Ngo-11ICASSP}, which describes the channel correlation and $P=\frac{n}{d}(d \geq 1)$ angles of arrival. As in \cite{Hoydis-13JSAC,Liu-15arXiv},
here $\qA$ is composed of $P$ columns of an arbitrary unitary $n \times n$ matrix, and $\qA$ can be given by different forms according to different physical channel models.
$\tilde \qh_{lmjk} \in \mathbb C^{P}$ is the small-scale fading channel vector, whose elements follow i.i.d. standard complex Gaussian distribution.
The large-scale fading factor is modeled as $\beta_{lmjk}=1/d_{lmjk}^\iota$, where $d_{lmjk}$ is the distance between ${\sf UE}_{j,k}$ and ${\sf RRH}_{l,m}$, and $\iota$ is the path-loss exponent.\footnote{The simplified model can be used because of the following two reasons. First, the number of degrees of freedom $P$, which depends on the scattering in the channel can be assumed as constant or to scale with the number of antennas $n$ \cite{Hoydis-13JSAC}. Second, the assumption that all users have the same correlation matrix reflects a worst-case performance because users instantaneous channel vectors are less orthogonal due to the same correlation matrix, which leads to large multi-user interference.}

Denote the index of the RRH in cell $j$ with minimum distance to ${\sf{UE}}_{j,k}$ as $\bar m_{jk}$. The average large-scale fading factor between ${\sf UE}_{j,k}$ and ${\sf RRH}_{j,\bar m_{jk}}$ (the average is taken over different users and different user locations) is related to both the number of RRHs $M$ and the radius of the cell. If $M$ is increased, or if the cell radius is decreased, the average distance between ${\sf UE}_{j,k}$ and ${\sf RRH}_{j,\bar m_{jk}}$ will be reduced. Assume that each cell is a circle with radius $R_c$, and the coverage area of each RRH is a circle with radius $r$. Then, $r$ can be approximated as ${R_c}/{\sqrt{M}}$. Since the average distance between ${\sf UE}_{j,k}$ and ${\sf RRH}_{j,\bar m_{jk}}$ is scaled with $r$, base on $\beta_{lmjk}=1/d_{lmjk}^\iota$, $\beta_{j\bar m_{jk}jk}$ is scaled with $M^\frac{\iota}{2}$. The average distances between ${\sf UE}_{j,k}$ and other $M-1$ RRHs in its cell (i.e., ${\sf RRH}_{jm}$, $m \neq \bar m_{jk}$), and the average distances between ${\sf UE}_{j,k}$ and RRHs in other cells (i.e., ${\sf RRH}_{lm}$, $l \neq j$), can be roughly treated as independent of $M$ and only determined by the cell radius $R_c$.

Based on the above analysis, $\beta_{lmjk}$ can be given by

\begin{align}\label{eq: interference factor}
 \beta_{lmjk}=\begin{cases}
M ^ \frac{\iota}{2} \beta, &\text{if} \quad j=l \ \text{and} \ m=\bar m_{jk}, \\
\alpha_1 \beta, &\text{if} \quad j=l \ \text{and} \ m \neq \bar m_{jk}, \\
\alpha_2 \beta, &\text{if} \quad j \neq l.
\end{cases}
\end{align}
where $\beta$ is the average large-scale fading with respect to different user locations, and it is determined by the cell radius and path-loss exponent. $\alpha_1$ ($0 \leq \alpha_1 \leq 1)$ represents the difference of large-scale fading factors from the nearest RRH and other $M-1$ RRHs in the cell, and $\alpha_2\ (0 \leq \alpha_2 \leq 1)$ can be named as inter-cell interference factor, which represents the difference of large-scaling factors from the nearest RRH and RRHs in other cells. When $M=1$ and $\alpha_1=0$, this model is consistent with the simplified model of CAS in \cite{Hoydis-13JSAC,Ngo-13TCOM,Liu-15arXiv}. With the simplified model, we have the following corollary.

\begin{Corollary}\label{Corollary: SINR of simplified model}
\textit{With the simplified model in \eqref{eq:simplified model}, the deterministic equivalent of user's SINR in \eqref{eq:SINR_f_RMT} can be written as}
\begin{equation}\label{eq:DE of simplified}
\overline {\mbox{SINR}}_{jk}=\frac{S}{\frac{\sigma^2}{p_d n} + I_{PC} +
I_{MU}},
\end{equation}
\textit{where the desired signal power $(S)$, the power of interference due to PC $(I_{PC})$, and uncorrelated multiuser interference $(I_{MU})$ are respectively given by}
\begin{subequations}\label{eq:S and I}
\begin{align}
\label{eq:S}
  S =& \beta^2 \left( M^\iota \nu_1+(M-1) \alpha_1^2 \nu_2 \right),\\
\label{eq:I_PC}
 I_{PC}=& \beta^2 \alpha_2 \left( \bar L_1 -M^ \frac{\iota}{2} \right) \frac{\left(M^ \frac{\iota}{2} \nu_1+(M-1) \alpha_1 \nu_2 \right)^2}{\left(M^\iota \nu_1+(M-1) \alpha_1^2 \nu_2 \right)},\\
\label{eq:I_MU}
I_{MU}=&\frac{1}{n} I_{MU}' \nonumber \\
=& \frac{\beta d K}{n} \left( M^{\frac{\iota}{2}-1}+(1-\frac{1}{M}) \alpha_1 + \alpha_2 \left(L -1 \right)\right),
\end{align}
\end{subequations}
\textit{where} $\bar L_1=M^ \frac{\iota}{2}+\alpha_2(L / \psi-1)$, $\bar L_2=\alpha_1+\alpha_2(L / \psi-1)$, $\nu_1=p_u \tau_u d / (\sigma^2+p_u \tau_u \bar L_1 \beta d)$, $\nu_2=p_u \tau_u d / (\sigma^2+p_u \tau_u \bar L_2 \beta d)$, \textit{and $I_{MU}'$ is the uncorrelated multiuser interference scaled by $n$.}
\end{Corollary}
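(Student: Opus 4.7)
The plan is to specialize each of the four terms in the deterministic SINR expression (\ref{eq:SINR_f_RMT}) of Proposition~\ref{Proposition:SINR} to the simplified channel model (\ref{eq:simplified model}) together with the large-scale fading pattern (\ref{eq: interference factor}), and to verify that the four pieces collapse respectively to the desired signal $S$, the pilot contamination $I_{PC}$, the multi-user interference $I_{MU}$, and the unchanged noise term $\sigma^2/(p_d n)$.

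The central simplification comes from the rank-$P$ structure of $\qA$: since $\qA^H\qA=\qI_P$, $\qA\qA^H$ is an orthogonal projection, so every covariance and MMSE matrix appearing in Proposition~\ref{Proposition:SINR} is proportional to $\qA\qA^H$ with scalar coefficients, and all the matrix products diagonalise in a common basis. Inverting $\qQ_{jmjk}$ only on the range of $\qA\qA^H$ yields the compact identity $\tfrac{1}{n}\tr\qPhi_{jmjk}=\beta_{jmjk}^{2}\,p_u\tau_u d/(\sigma^2+p_u\tau_u d\,\bar\beta_{jmk})$, where $\bar\beta_{jmk}\triangleq\sum_{l\in\calL_j}\beta_{jmlk}$. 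Because (\ref{eq: interference factor}) forces $\bar\beta_{jmk}$ to take only the two values $\bar L_1\beta$ and $\bar L_2\beta$, this trace reduces to $\beta_{jmjk}^{2}\nu_1$ if $m=\bar m_{jk}$ and $\beta_{jmjk}^{2}\nu_2$ otherwise; this is the atomic building block for everything that follows.

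I would then evaluate each sum over $m$ by splitting it into the single ``nearest RRH'' term and the $M-1$ ``other RRH'' terms. Applied to the numerator, this gives $\beta^2\bigl(M^\iota\nu_1+(M-1)\alpha_1^2\nu_2\bigr)$; because this is independent of $k$, $\bar\lambda_l$ is the same for every $l$ and the numerator collapses to $S$ in~(\ref{eq:S}). For the PC term the relevant trace couples the channel from RRH $(l,m)$ to user $(j,k)$ with cell $l$'s MMSE estimate of its own user $(l,k)$, producing per~$m$ the cross product $\alpha_2\beta\,\beta_{lmlk}\nu_i$; the same nearest/other split yields $\alpha_2\beta^2\bigl(M^{\iota/2}\nu_1+(M-1)\alpha_1\nu_2\bigr)$, and squaring, multiplying by $\bar\lambda_l$, and counting the $L/\psi-1$ pilot-sharing cells reproduces (\ref{eq:I_PC}) upon recognising $\alpha_2(\bar L_1-M^{\iota/2})=\alpha_2^2(L/\psi-1)$. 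For the MU term, the trace $\tfrac{1}{n}\tr(\qR_{lmjk}\qPhi_{lmli})$ picks up an extra factor of $d$ from the additional $(n/P)$ in $\qR_{lmjk}$ and equals $\beta_{lmjk}\beta_{lmli}^{2}d\nu_i$; summing over the $K$ users $i$ in cell $l$ under the natural symmetric assumption that on average $K/M$ users have each RRH as their nearest, the inner sum cancels $\bar\lambda_l$ to leave the clean factor $K/M$, and summing $\beta_{lmjk}$ over $m$ and $l$ (one in-cell contribution, $L-1$ out-of-cell ones) with the outer $1/n$ then delivers (\ref{eq:I_MU}).

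The main technical obstacle is the bookkeeping for $I_{MU}$: it couples the random user-to-RRH assignment, which supplies the $K/M$ factor and is implicit in the symmetric statement of (\ref{eq: interference factor}), with the asymmetry between the home cell, where $\beta_{lmjk}$ depends on $m$ through $\bar m_{jk}$, and the interfering cells, where $\beta_{lmjk}=\alpha_2\beta$ is independent of $m$. Correctly disentangling these two sources of $m$-dependence is what produces the three-term sum $M^{\iota/2-1}+(1-\tfrac{1}{M})\alpha_1+\alpha_2(L-1)$. The PC calculation requires an analogous but less intricate splitting, and the noise term is unaffected by the specialization.
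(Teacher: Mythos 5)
Your proposal is correct and follows essentially the same route as the paper's Appendix A: reduce every $\qR_{lmjk}$ to a scalar multiple of the projection $\qA\qA^H$, invert $\qQ_{lmjk}$ on its range to get the two-valued traces $\beta_{jmjk}^2\nu_1$ or $\beta_{jmjk}^2\nu_2$, and assemble $S$, $I_{PC}$, and $I_{MU}$ by the nearest-RRH/other-RRH split. If anything, you are more explicit than the paper on the one step it glosses over, namely that the cancellation of $\bar\lambda_l$ in $I_{MU}$ rests on the symmetric assignment of roughly $K/M$ users to each RRH as their nearest.
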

\begin{proof}
  See Appendix A.
\end{proof}

From Corollary \ref{Corollary: SINR of simplified model}, we know that $S$ and $I_{PC}$ do not change with the number of each RRH antennas $n$, while $I_{MU}$ and the noise vanish when $n$ grows to infinity.

Assume that the $K$ users achieve a uniform rate $\gamma$ averaged over user locations\footnote{The uniform rate assumption is based on the large-scale fading averaged over different user locations, so we call it uniform rate averaged over different user locations, or simply, average uniform rate.}, solving $p_d$ from \eqref{eq:DE of simplified}, we get the transmit power
\begin{equation}\label{eq:pd}
p_d=\frac{\sigma^2}{n \left( \frac{S}{2^\gamma -1}- I_{PC}\right)-I_{MU}'}.
\end{equation}
\begin{Remark}\label{Remark:mimimum n}
\textit{To achieve user rate $\gamma$, the transmit power $p_d$ should be positive, from \eqref{eq:pd}, we know that the antenna number $n$ must satisfy}
\begin{equation}\label{eq:minimum n}
n> \frac{I_{MU}'}{\frac{S}{2^\gamma-1}-I_{PC}}.
\end{equation}
\end{Remark}

Since the transmit power in \eqref{eq:pd} and the backhaul power are increasing with $\gamma$, the total power consumption $P_{\text{Total}}$ is a function of $\gamma$. With average uniform rate $\gamma$, the cell EE can be expressed as
\begin{equation}\label{eq: EE of simplied model}
  \eta=\frac{\frac{T-\tu}{T} K \gamma}{P_{\text{Total}}(\gamma)},
\end{equation}
with $P_{\text{Total}}(\gamma)=P_{\text{FIX}}+n M P_{\text{RRH}}+\frac{T-\tau_u}{T} \frac{p_d}{\zeta} K + M(P_0+P_{\text{BT}} \frac{T-\tu}{T} K \gamma)$, and $p_d$ is given by \eqref{eq:pd}.

\begin{figure}
\centering
\includegraphics[width=0.5\textwidth]{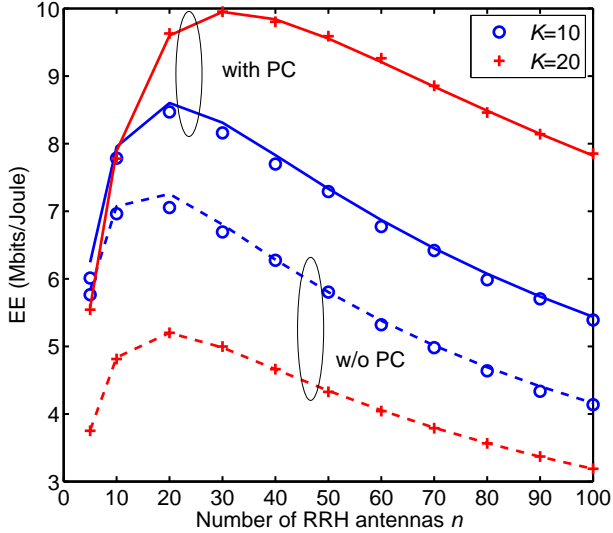}
\caption{Accuracy of asymptotic EE. $p_d=30$ dBm, $M=7$, and $d=1$. The solid curves depict analytical results, while the markers depict simulation results. Two cases are considered: with pilot contamination (denoted as ``with PC") and without pilot contamination (denoted as ``w/o PC").}\label{EE_asymptotic}
\end{figure}

Before we proceed, we verify the accuracy of the derived asymptotic EE at different number of RRH antennas $n$. In Fig.~2, we show the EE when $p_d=30$ dBm, $M=7$, $d=1$, $K=10$ and $20$, respectively. In the case with pilot contamination (denoted as ``with PC"), we set the pilot reuse factor $\psi=1$, and in the case without pilot contamination (denoted as ``w/o PC"), we set $\psi=L$. Other simulation parameters are listed in the beginning of Section V.
It can be observed that the asymptotic results (solid curves) agree with the simulation results (markers) achieved by Monte-Carlo averaging over 1000 channel realizations, even for small number of antennas $n$.
We conclude that the asymptotic EE is accurate even in practical non-asymptotic regimes, and thus can be applied to the optimization problems discussed in the sequel.
\section{Energy Efficiency Optimization}
In this section, we will answer the following questions: For a given uniform rate averaged over different user locations, to maximize the EE, how many antennas should be employed by each RRH? What is the optimal number of users? How many RRHs should be deployed? And what are the impacts on these optimal values due to different parameters, e.g. the channel correlation, the channel gain, the power consumption parameters, and the PC?
\subsection{The Optimal Number of each RRH Antennas $n$}
We first derive and analyze the optimal value of $n$ with fixed $M$ and $K$.
Based on \eqref{eq: EE of simplied model}, the EE optimization problem can be formulated as
\begin{align}\label{eq: EE joint optimization problem}
\max\limits_{n} \quad & \eta=\frac{\frac{T-\tu}{T} K \gamma}{P_{\text{Total}}(\gamma)}, \\
\mbox{s.t.} \quad & \eqref{eq:minimum n}, \ n \in \mathbb Z_+. \nonumber
\end{align}
For a given average uniform rate $\gamma$, the problem can be reduced to
\begin{align}\label{eq: EE equivalent optimization}
\min\limits_{n} \quad & P_{\text{Total}}, \\
\mbox{s.t.} \quad & \eqref{eq:minimum n}, \ n \in \mathbb Z_+. \nonumber
\end{align}
For convenience, we introduce a notation:
\begin{equation}
\lfloor x \rceil_\eta=\begin{cases}
  \lfloor x \rfloor, \ \text{if} \ \eta(\lfloor x \rfloor) > \eta (\lceil x \rceil),\\
\lceil x \rceil, \ \text{otherwise},
\end{cases}
\end{equation}
where $\lfloor x \rfloor$ denotes the largest integer not greater than $x$, and $\lceil x \rceil$ denotes the smallest integer not less than $x$.
\begin{Theorem}\label{Theorem: the optimal n}
\textit{For a given uniform rate $\gamma$ averaged over different user locations, the optimal number of RRH antennas that maximizes the EE is}
\begin{equation}\label{eq: optimal n}
n^\star= \left\lfloor  \sqrt{\frac{\frac{T-\tau_u}{T \zeta} \sigma^2 K}{ \left( \frac{S}{2^\gamma -1}-I_{PC} \right) M P_{\text{RRH}}}}+ \frac{I_{MU}'}{\frac{S}{2^\gamma -1}-I_{PC}} \right\rceil_\eta.
\end{equation}
\end{Theorem}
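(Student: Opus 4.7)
The plan is to reduce the maximization of $\eta$ to a one-dimensional convex minimization over $n$, solve the first-order condition in closed form, and then round to the nearest integer.

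First I would observe that for a fixed average uniform rate $\gamma$, the numerator $\frac{T-\tau_u}{T}K\gamma$ of $\eta$ in \eqref{eq: EE of simplied model} does not depend on $n$, so \eqref{eq: EE joint optimization problem} is equivalent to minimizing $P_{\text{Total}}(\gamma)$, as already noted in \eqref{eq: EE equivalent optimization}. Among the four summands of $P_{\text{Total}}(\gamma)$, only two depend on $n$: the antenna circuit term $nM P_{\text{RRH}}$, which is linear and increasing in $n$, and the transmit-power term $\frac{T-\tau_u}{T\zeta}K p_d$, where $p_d$ is given by \eqref{eq:pd}. The backhaul term $M(P_0 + P_{\text{BT}}\frac{T-\tu}{T}K\gamma)$ is constant in $n$ for fixed $\gamma$, and so is $P_{\text{FIX}}$.

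Next I would set $a\triangleq \frac{S}{2^\gamma-1}-I_{PC}$ and $b\triangleq I_{MU}'$, so that $p_d=\sigma^2/(na-b)$ on the feasible region $n>b/a$ given by \eqref{eq:minimum n}. The $n$-dependent part of $P_{\text{Total}}$ becomes
\begin{equation}
f(n)=nM P_{\text{RRH}}+\frac{T-\tau_u}{T\zeta}K\cdot\frac{\sigma^2}{na-b}.
\end{equation}
Since $na-b>0$ on the feasible region, $1/(na-b)$ is strictly convex in $n$, and a nonnegative combination with the linear term is strictly convex. Hence $f$ has a unique interior minimizer determined by the first-order condition $f'(n)=0$, namely
\begin{equation}
M P_{\text{RRH}}=\frac{T-\tau_u}{T\zeta}K\cdot\frac{\sigma^2 a}{(na-b)^2}.
\end{equation}
Solving for $na-b$ and taking the positive root (the only one consistent with feasibility), I obtain
\begin{equation}
na-b=\sqrt{\frac{(T-\tau_u)K\sigma^2 a}{T\zeta M P_{\text{RRH}}}},
\end{equation}
which rearranges to the expression inside the rounding brackets in \eqref{eq: optimal n}.

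Finally, since $n$ must be a positive integer, the continuous minimizer should be replaced by whichever of $\lfloor\cdot\rfloor$ and $\lceil\cdot\rceil$ yields larger $\eta$; this is exactly what the notation $\lfloor\cdot\rceil_\eta$ encodes, and convexity of $f$ guarantees that the optimum lies at one of these two neighbors. The main obstacle is really just bookkeeping: verifying convexity on the open feasible half-line $n>b/a$, discarding the spurious negative root of the quadratic in $na-b$, and checking that the closed-form minimizer indeed lies in the feasible region (which is automatic since $n^\star=b/a+\sqrt{\cdots/a}>b/a$). No additional estimates on the random matrix quantities are needed, because Corollary \ref{Corollary: SINR of simplified model} has already reduced $S$, $I_{PC}$, and $I_{MU}'$ to deterministic quantities independent of $n$.
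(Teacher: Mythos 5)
Your proposal is correct and follows essentially the same route as the paper: both reduce the problem to minimizing $f(n)=nMP_{\text{RRH}}+\frac{T-\tau_u}{T\zeta}K\sigma^2/(na-b)$ over the feasible half-line $n>b/a$, identify the same unique interior minimizer, and round to the better of the two neighboring integers via $\lfloor\cdot\rceil_\eta$. The only cosmetic difference is that you establish strict convexity of $f$ and solve $f'(n)=0$, whereas the paper invokes the inequality $Ax+\frac{B}{x-C}\geq AC+2\sqrt{AB}$ with its equality condition and then argues quasi-convexity from the sign of $f'$; both yield the identical closed form.
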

\begin{proof}
See Appendix B.
\end{proof}

From Theorem \ref{Theorem: the optimal n}, some insights on how $n^\star$ is affected by other system parameters can be obtained, the results are described in the following remark.
\begin{Remark}\label{Remark: insights of optimal n}
\textit{From Theorem \ref{Theorem: the optimal n}, the following observations can be made:}
\begin{enumerate}
\item When $K$ increases, the scaled multi-user interference $I_{MU}'$ increases, and $n^\star$ increases with $K$ accordingly.
\item When $P_{\text{RRH}}$ decreases, $n^\star$ increases. That is to say, using lower power consuming hardware components to reduce $P_{\text{RRH}}$, $n^\star$ will increase.
\item When the noise is comparably negligible ($\sigma^2 \ll p_u \tau_u \bar L_1 \beta d$), $n^\star$ is an increasing function of $d$. A large value of $d$ means an environment with insufficient scattering, in this case, more antennas are required to achieve the optimum EE.
\item When the noise is comparably negligible ($\sigma^2 \ll p_u \tau_u \bar L_1 \beta d$), as the cell size increases, or $\beta$ decreases, $n^\star$ will increase.
\item When the pilot reuse factor $\psi$ decreases, or the PC becomes more serious, $n^\star$ will increase.
\end{enumerate}
\end{Remark}
\begin{proof}
1) and 2) can be observed from \eqref{eq:I_MU} and \eqref{eq: optimal n} directly. When the noise is negligible, i.e., $\sigma^2 \ll p_u \tau_u \bar L_1 \beta d$, we have $\nu_1 \approx 1/(\bar L_1 \beta)$, $\nu_2 \approx 1/(\bar L_2 \beta)$. Substituting $\nu_1$ and $\nu_2$ into \eqref{eq:S and I}, it can be known that $S$, $I_{PC}$ and $I_{MU}'$ depend linearly on $\beta$, and both $S$ and $I_{PC}$ are independent of $d$, while $I_{MU}'$ increases with $d$. Thus, $n^\star$ increases with $d$, and decreases with $\beta$, which are summarized in 3) and 4). When $\psi$ decreases, $I_{PC}$ increases, and more antennas should be deployed to achieve the maximal EE.
\end{proof}

The above observations can also be explained as follows:

With more users, the multi-user interference increases, hence more antennas are required to achieve the target rate $\gamma$.
When $P_{\text{RRH}}$ becomes larger, more power is required to run each RRH antenna, in this case, the transmit power $p_d$ is small when compared to the power consumed for running the antennas, and thus using more antennas may increase the total power consumption and decrease the EE. However, if $P_{\text{RRH}}$ is small and fixed, the running power of antennas is smaller than $p_d$. When $d$ is larger or the average channel gain $\beta$ is smaller, increasing the number of antennas will improve the array gain to reduce $p_d$. In such a scenario, it is optimal to equip more antennas to reduce the total power consumption and improve the EE. When $\psi$ decreases, the pilot sequences will be reused in more cells, the interference due to pilot contamination will increase, and hence a large array gain is needed to reduce the required transmit power $p_d$ and then improve the EE.
\begin{Corollary}\label{Corollary: the optimal n without PC}
\textit{The optimal $n^\star$ is lower bounded when there is no PC ($I_{PC}=0$), which is given by \eqref{eq: the lower bound of optimal n} , shown at the top of next page.}
\begin{figure*}[!t]
\begin{equation}\label{eq: the lower bound of optimal n}
n^\star= \left\lfloor  \sqrt{\frac{\frac{T-\tau_u}{T \zeta} \sigma^2 K}{ \frac{\beta\left(M^{\frac{\iota}{2}} + \left(M-1\right) \alpha_1 \right)}{2^\gamma -1} M P_{\text{RRH}}}}+ \frac{d K \left(M^{\frac{\iota}{2}-1}+(1-\frac{1}{M})\alpha_1 +\left(L-1\right)\alpha_2\right)}{\frac{\left(M^ \frac{\iota}{2} + \left(M-1\right) \alpha_1 \right)}{2^\gamma -1}} \right\rceil_\eta.
\end{equation}
\end{figure*}
\end{Corollary}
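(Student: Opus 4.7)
The plan is to specialize the general formula \eqref{eq: optimal n} of Theorem~\ref{Theorem: the optimal n} to the no‑pilot‑contamination regime and then bound the resulting expression. Pilot contamination is absent precisely when the reuse factor is $\psi = L$, so $L/\psi - 1 = 0$. Plugging this into the definitions preceding \eqref{eq:DE of simplified} collapses $\bar L_1 = M^{\iota/2}+\alpha_2(L/\psi-1)$ to $\bar L_1 = M^{\iota/2}$, and $\bar L_2$ to $\alpha_1$. Substituting $\bar L_1 = M^{\iota/2}$ into \eqref{eq:I_PC} makes the factor $\bar L_1 - M^{\iota/2}$ vanish, so $I_{PC}=0$, which kills the PC term in both the square‑root and the additive part of \eqref{eq: optimal n}.

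The next step is to recognize the monotonicity structure of \eqref{eq: optimal n}. Both the square‑root term and the linear term in $n^\star$ contain $\frac{S}{2^\gamma-1}-I_{PC}$ in the denominator, so $n^\star$ is a decreasing function of $S$ for fixed $I_{PC}=0$. Hence, any upper bound on $S$ produces a lower bound on $n^\star$. To get the explicit bound in \eqref{eq: the lower bound of optimal n}, I would use the elementary inequalities
\begin{equation*}
\nu_1 \;=\; \frac{p_u\tau_u d}{\sigma^2 + p_u\tau_u \bar L_1 \beta d} \;\leq\; \frac{1}{\bar L_1 \beta},\qquad \nu_2 \;\leq\; \frac{1}{\bar L_2 \beta},
\end{equation*}
which follow immediately from $\sigma^2\geq 0$. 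Plugging these into \eqref{eq:S} along with $\bar L_1 = M^{\iota/2}$ and $\bar L_2=\alpha_1$ yields
\begin{equation*}
S \;\leq\; \beta^2\!\left(\frac{M^\iota}{M^{\iota/2}\beta}+\frac{(M-1)\alpha_1^2}{\alpha_1\beta}\right) \;=\; \beta\bigl(M^{\iota/2}+(M-1)\alpha_1\bigr).
\end{equation*}

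Finally, I would substitute $I_{PC}=0$, the upper bound on $S$ just obtained, and the explicit form of $I_{MU}'$ read off from \eqref{eq:I_MU} into the general formula \eqref{eq: optimal n}. After minor algebraic simplification (cancelling a factor of $\beta$ in the additive term and pulling $\beta$ out of the square root), the expression reduces exactly to \eqref{eq: the lower bound of optimal n}. The floor/ceil operator $\lfloor\cdot\rceil_\eta$ can be treated entry‑wise since it acts identically on both formulas.

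No serious obstacle arises here; the only mildly subtle point is confirming that the noise‑free estimates of $\nu_1,\nu_2$ give genuine upper bounds (not merely approximations) on $S$ and consequently a true lower bound on $n^\star$, rather than an asymptotic one. This is why Corollary~\ref{Corollary: the optimal n without PC} is worded in terms of a lower bound: the floor coming from the noise term means equality holds only in the limit $\sigma^2\ll p_u\tau_u\bar L_1\beta d$, exactly the regime invoked in the proof of Remark~\ref{Remark: insights of optimal n}.
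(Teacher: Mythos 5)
Your specialization to $\psi=L$ (so that $\bar L_1=M^{\iota/2}$, $\bar L_2=\alpha_1$, and $I_{PC}=0$) followed by substitution into \eqref{eq: optimal n} is exactly the computation the paper performs, and your observation that the stated closed form requires replacing $\nu_1,\nu_2$ by $1/(\bar L_1\beta)$ and $1/(\bar L_2\beta)$ is a fair point that the paper glosses over (it silently works in the noise-negligible regime $\sigma^2\ll p_u\tau_u\bar L_i\beta d$, where these substitutions are equalities rather than your one-sided inequalities). However, you have misread what is being lower-bounded, and as a consequence the central step of the paper's proof is absent from yours. The corollary's claim is that the quantity in \eqref{eq: the lower bound of optimal n} --- namely $n^\star$ evaluated at $\psi=L$ --- lower-bounds the optimal $n^\star$ \emph{when pilot contamination is present}; the paper obtains this in one line from observation 5) of Remark~\ref{Remark: insights of optimal n}, i.e.\ that $n^\star$ is decreasing in $\psi$ and hence minimized at $\psi=L$. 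Your proof never compares the with-PC and without-PC optima at all.

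What you prove instead is a different inequality: that the noise-free expression lower-bounds the exact no-PC optimizer, because $S\le\beta\bigl(M^{\iota/2}+(M-1)\alpha_1\bigr)$ and the unconstrained minimizer is decreasing in $S$ when $I_{PC}=0$. That argument is internally sound, but it establishes \eqref{eq: the lower bound of optimal n} only as an inequality $n^\star\ge(\cdot)$ for the no-PC case, whereas the corollary asserts it as an equality (the value of $n^\star$ without PC), and it leaves the actual ``lower bound over PC levels'' assertion unproved. To repair the argument you should invoke (or re-derive) the monotonicity of $n^\star$ in $\psi$ --- decreasing $\psi$ increases $I_{PC}$ and hence both the square-root and additive terms of \eqref{eq: optimal n} --- and then state the no-PC formula under the same noise-negligible approximation already used in Remark~\ref{Remark: insights of optimal n} and Theorem~\ref{Theorem:optimal K}.
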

\begin{proof}
From Remark \ref{Remark: insights of optimal n}, $n^\star$ is decreasing with $\psi$, in the case without PC, $\psi=L$, $\bar L_1=M^ \frac{\iota}{2}$, $\bar L_2=\alpha_1$, and $I_{PC}=0$. Substituting these results into \eqref{eq: optimal n} yields  Corollary \ref{Corollary: the optimal n without PC}.
\end{proof}
\begin{Remark}
From Corollary \ref{Corollary: the optimal n without PC}, we can know that when the inter-cell interference factor $\alpha_2$ increases, more antennas are required to achieve the maximum EE.
\end{Remark}

\subsection{The Optimal Number of Users $K$}
With more users in each cell, the sum rate will increase accordingly, but to satisfy the given average uniform rate, the transmit power is proportional to the number of users $K$ as well, thereby there exists an optimal value of $K$ to maximize the EE. We now investigate the optimal number of users when other parameters are given.
The problem is formulated as
\begin{align}\label{eq: optimize K}
\max\limits_{K} \quad & \eta=\frac{\frac{T-\tu}{T} K \gamma}{P_{\text{Total}}(\gamma)}, \\
\mbox{s.t.} \quad & p_d>0, \ K \in \mathbb Z_+. \nonumber
\end{align}

Plugging $\tau_u=\psi K$ and \eqref{eq:pd} into \eqref{eq: EE of simplied model}, the EE is given by \eqref{eq: EE expression of K}, shown at the top of next page.
\begin{figure*}[!t]
\begin{equation}\label{eq: EE expression of K}
  \eta=\frac{\frac{T-\psi K}{T} K \gamma}{P_{\text{FIX}}+n M P_{\text{RRH}}+\frac{T-\psi K}{T} \frac{\sigma^2/\zeta}{n \left( \frac{S}{2^\gamma -1}- I_{PC}\right)-I_{MU}'}K +M(P_0+P_{\text{BT}} \frac{T-\psi K}{T} K \gamma)}.
\end{equation}
\hrulefill
\end{figure*}

When the noise is comparably negligible, $\nu_1 \approx 1/(\bar L_1 \beta)$, $\nu_2 \approx 1/(\bar L_2 \beta)$. Then, in \eqref{eq: EE expression of K}, the scaled multiuser interference $I_{MU}'$ is the function of $K$, while the desired signal power $S$ and the power of PC interference $I_{PC}$ are independent of $K$. For notation convenience, we rewrite $I_{MU}'$ in the form
\begin{equation}\label{eq: rewrite IMU}
I_{MU}'=\beta d K \xi,
\end{equation}
where $\xi=M^{\frac{\iota}{2}-1}+(1-\frac{1}{M}) \alpha_1 + \alpha_2 \left(L -1 \right)$.
\begin{Theorem}\label{Theorem:optimal K}
\textit{For a given uniform rate $\gamma$ averaged over different user locations, when the noise is comparably negligible, the optimal number of users that maximizes the EE is}
\begin{equation}\label{eq: the optimal K}
  K^\star=\left\lfloor K^\circ \right\rceil_\eta,
\end{equation}
\textit{where $K^\circ$ is the root in the range $(0, \min\{\frac{T}{\psi},  \frac{\mu_1} {d \beta \xi}\})$ of the following equation}
\begin{equation}\label{eq: the equation of K}
\mu_2 (2 K \psi -T) \left( \mu_1-d \beta \xi K \right)^2 + \frac{\sigma^2}{\zeta \gamma} d \beta \xi \big( (T-K \psi) K \big)^2 =0,
\end{equation}
\textit{with} $\mu_1=n \left( \frac{S}{2^ \gamma -1}-I_{PC}\right)$ and $\mu_2= \frac{T}{\gamma} (P_{\text{FIX}}+n M P_{\text{RRH}} + M P_0)$.

\begin{proof}
See Appendix C.
\end{proof}
\end{Theorem}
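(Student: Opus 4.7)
The plan is to treat $K$ as a continuous variable on the open interval $I = (0, K_{\max})$, where $K_{\max} = \min\{T/\psi,\, \mu_1/(d\beta\xi)\}$ encodes the pilot-budget constraint $\tau_u = \psi K < T$ and the positivity constraint $p_d > 0$ coming from \eqref{eq:pd}, locate the continuous maximizer $K^\circ$ by the first-order condition $d\eta/dK = 0$, and finally round via the convention $\lfloor\cdot\rceil_\eta$ defined before Theorem~\ref{Theorem: the optimal n}.

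First I would recast the EE in \eqref{eq: EE expression of K} into a form amenable to differentiation. Substituting $\tau_u = \psi K$ and $I_{MU}' = \beta d \xi K$, and using that once noise is negligible the approximations $\nu_1 \approx 1/(\bar L_1 \beta)$, $\nu_2 \approx 1/(\bar L_2 \beta)$ make both $S$ and $I_{PC}$ independent of $K$, I can write $\eta(K) \propto \phi(K)/B(K)$ with $\phi(K) = (T - \psi K)K$ and
\begin{equation*}
B(K) = \mu_2 + M P_{\text{BT}}\,\phi(K) + \frac{\sigma^2\,\phi(K)}{\zeta\gamma\bigl(\mu_1 - d\beta\xi K\bigr)}.
\end{equation*}
All the $K$-dependence is now concentrated in $\phi(K)$ and in the single rational factor $1/(\mu_1 - d\beta\xi K)$.

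Next, the quotient rule gives the critical-point condition $\phi'(K)B(K) = \phi(K)B'(K)$. This simplifies dramatically: the $M P_{\text{BT}}\phi$ term and the piece of $B'$ proportional to $\phi'(K)/(\mu_1 - d\beta\xi K)$ cancel against the corresponding cross products on the two sides, leaving the clean identity
\begin{equation*}
\phi'(K)\,\mu_2 \;=\; \frac{\sigma^2\,d\beta\xi}{\zeta\gamma\bigl(\mu_1 - d\beta\xi K\bigr)^2}\,\phi(K)^2.
\end{equation*}
Substituting $\phi'(K) = T - 2\psi K$ and clearing denominators reproduces \eqref{eq: the equation of K} after a single sign flip.

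Finally I would establish existence, uniqueness, and maximality of the root in $I$. Continuity with $\eta(0)=0$ and $\eta(K)\to 0^+$ at both endpoints of $I$ (the numerator vanishes at $T/\psi$, while $p_d$ diverges at $\mu_1/(d\beta\xi)$) forces an interior maximizer. On $(0, T/(2\psi))$ the left-hand side $\phi'(K)\mu_2$ is positive and strictly decreasing, whereas the right-hand side is positive and strictly increasing (both $\phi^2$ and $(\mu_1 - d\beta\xi K)^{-2}$ are increasing there), so they meet exactly once; for $K \ge T/(2\psi)$ the left-hand side is non-positive while the right-hand side remains strictly positive, so no further root exists. The main obstacle is precisely this monotonicity/sign-change bookkeeping, since it is what simultaneously guarantees uniqueness of $K^\circ$ and confirms that it lies below $T/(2\psi)$ and hence inside $I$; the integer optimum $K^\star = \lfloor K^\circ \rceil_\eta$ then follows directly.
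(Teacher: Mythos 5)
Your proposal is correct and follows essentially the same route as the paper's Appendix C: the stationarity condition for the EE (the paper differentiates $1/\eta$, you apply the quotient rule to $\eta=\phi/B$; the backhaul traffic term drops out either way) collapses to the quartic \eqref{eq: the equation of K}, and a sign/monotonicity argument then gives a unique root in the feasible interval, which must be the interior maximizer. Your uniqueness step --- splitting the equation into a strictly decreasing left-hand side and a strictly increasing right-hand side, which additionally pins $K^\circ$ below $T/(2\psi)$ --- is in fact slightly more careful than the paper's bare assertion that $z'(K)>0$.
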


Theorem~\ref{Theorem:optimal K} shows that $K^\star$ is a root of the quartic equation given by \eqref{eq: the equation of K}. The closed-form root expressions of a quartic equation can be found in \cite{Hungerford-96}. Due to the lengthy and complexity of these expressions, we can use a numerical algorithm, e.g., bisection method, to find the root in the range $(0, \min\{\frac{T}{\psi},  \frac{\mu_1} {d \beta \xi}\})$. Moreover, from \eqref{eq: the equation of K} we know that $K^\star$ is related to $\mu_2$, that is, $K^\star$ also depends on the terms of power consumption that are independent of $K$, including $P_{\text{FIX}}$, $P_{\text{RRH}}$, and $P_0$.

\subsection{The Optimal Number of RRHs $M$}
In the massive DAS we considered, the number of RRHs will influence the EE performance. On the one hand, the channel gain (or the distance) between ${\sf UE}_{j,k}$ and ${\sf RRH}_{j,\bar m_{jk}}$ is changing with the number of RRHs, on the other hand, the power consumption of backhaul increases with the number of RRHs. Given other system parameters, with a average uniform rate, the optimal number of RRHs $M$ for EE maximization problem can be formulated as
\begin{align}\label{eq: EE equivalent optimization of M}
\min\limits_{M} \quad & P_{\text{Total}}, \\
\mbox{s.t.} \quad & p_d(M)>0, \ M \in \mathbb Z_+. \nonumber
\end{align}

Due to the complex expression of $M$ in $\eta$, the closed-form of $M^\star$ is not allowed. However, $M^\star$ can be obtained efficiently with a one-dimensional search over the candidate set $\{1,2,\dots, M_{\max}\}$, i.e.
\begin{align}
 M^\star =&\argmin\limits_{\substack{M \in \{1,2,\dots, M_{\max}\}}} \quad  P_{\text{Total}},\\
 & \mbox{s.t.} \quad  p_d(M)>0, \nonumber
\end{align}
where $M_{\max}$ is a predefined value\footnote{We will see in simulations that EE first increases and then decreases with $M$. Thus, $M_{\max}$ can be determined from the behavior of EE. Moreover, we observe that the optimal $M$ is increasing with the number of users $K$, hence, $M_{\max}$ could be set as scaled with $K$.}. As shown in \eqref{eq:pd} and \eqref{eq: EE of simplied model}, $p_d$ and $P_{\text{Total}}$ are independent of instantaneous CSI, and hence $M^\star$ is independent of instantaneous CSI. $P_{\text{Total}}$ is related to $n$, $K$, $\gamma$, $\iota$, $\beta$, and the power consumption parameters.
Given these system parameters, $M^\star$ can be obtained by searching over \{$1, 2,\cdots, M_{\max}$\} only once, and it remains the same as long as these parameters unchanged.
\section{Simulation Results}
In this section, we conduct numerical simulations to confirm our analytical results. We set $L=7$, and the large-scale fading factors in \eqref{eq: interference factor} are chosen as follows. We consider the 7-RRH massive DAS as illustrated in Fig.~\ref{systemmodel}. In each cell, $K=10$ users are located uniformly at random. We take the 10 users in the center cell (indexed by cell 1) as samples. Let $\bar \beta_0$ be the average of the large-scale fading factors $\beta_{1\bar m_{1k}1k}$ over the 10 users, $\bar \beta_1$ be the average of $\beta_{1m1k} \ (m \neq \bar m_{1k})$ over the $M-1$ RRHs and the 10 users, and $\bar \beta_2$ be the average $\beta_{jm1k} \ (j\neq 1)$ over the RRHs in other six cells and the 10 users.
We generate 1000 random user locations to calculate $\Ex\left\{\bar \beta_0\right\}$, $\Ex\left\{ \bar \beta_1\right\}$, and $\Ex\left\{ \bar \beta_2\right\}$. Base on \eqref{eq: interference factor}, we compute the average channel gain $\beta$, the interference factor $\alpha_1$, and $\alpha_2$ as $\Ex\left\{\bar \beta_0\right\}/M^\frac{\iota}{2}$, $\Ex\left\{\bar \beta_1\right\}/\beta$, and $\Ex\left\{\bar \beta_2\right\}/\beta$, respectively. By setting the cell radius $R_c$ be 2 km, and the path-loss exponent $\iota$ be $2.5$, we obtain $\beta=2.24 \times 10^{-8}$, $\alpha_1=0.54$, and $\alpha_2=0.075$.

Other simulation parameters are  defined in Table~\ref{simulation parameters} \cite{Onireti-13TCOM,EmilB-15TWC}. Unless otherwise stated, we keep these parameters in the following simulations. The detailed discussions are as follows.

\begin{table}
\centering
\caption{Simulation Parameters} \label{simulation parameters}
\begin{tabular}{|c|c|}
\hline Parameter & Value \\
\hline Amplifier efficiency : $\zeta$ & 0.4 \\
\hline Coherence interval : T & 196   \\
\hline  System bandwidth: $B$ & 20 MHz  \\
\hline  Fixed backhaul power: $P_0$ & 0.825 W \\
\hline  Traffic dependent backhaul power:${P_\text{BT}}$   & 0.25W/(Gbits/s)\\
\hline  Fixed system power: ${P_\text{FIX}}$   & 9 W \\
\hline  Power of each antenna at RRH: ${P_\text{RRH}}$ & 0.2 W \\
\hline Total noise power: $N_0 B$ & $-40$ dBm \\
\hline
\end{tabular}
\end{table}

\subsection{Impact of channel correlation and channel gain on the maximal EE and the optimal $n$}
\begin{figure}
\centering
\includegraphics[width=0.5\textwidth]{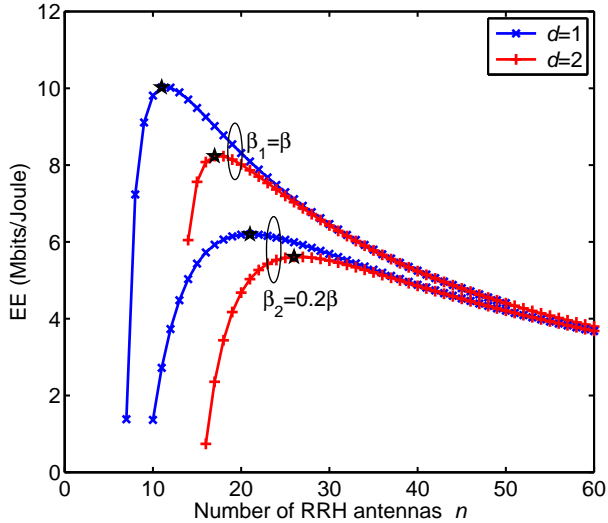}
\caption{Impact of channel correlation $d$ and average channel gain $\beta$ on the maximal EE and $n^\star$. $\psi=1$, $M=7$, $K=10$, and $\gamma=2$. $n^\star$ increases with $d$ and decrease with $\beta$, while the maximal EE decreases with $d$ and increases with $\beta$.}\label{EE_dandbeta}
\end{figure}

The EE achieved by different number of RRH antennas $n$ for different values of $d$ and $\beta$ when the pilot reuse factor $\psi=1$ are presented in Fig.~\ref{EE_dandbeta}. An average uniform rate $\gamma$ of 2 bit/s/Hz and a fixed number of RRHs $M=7$ are assumed. From the simulation result we note that when $\beta_1=2.24 \times 10^{-8}$, $n^\star=11$ and $n^\star=17$ are optimal to maximize the EE for $d=1$ and $d=2$, respectively. When $\beta_2=0.2 \beta_1$, $n^\star=21$ and $n^\star=26$ are optimal for $d=1$ and $d=2$, respectively. These optimal values agree with the results from Theorem \ref{Theorem: the optimal n} (marked with $\star$). From the curves, we conclude that when the channel gain $\beta$ is fixed, as compared to the scenario without channel correlation ($d=1$), with channel correlation ($d=2$), the optimal number of antennas to achieve the maximal EE will be larger, but the achieved maximal EE is lower, since the power to run the total antennas increases. Comparing the two sets of curves of $\beta_1=2.24 \times 10^{-8}$ and $\beta_2=0.2 \beta_1$, when $\beta$ decreases, $n^\star$ increases, and a higher average channel gain results in a higher maximal EE. These insights are consistent with 3) and 4) of Remark \ref{Remark: insights of optimal n}.

\subsection{Impact of PC and the power of each RRH antenna $P_{\text{RRH}}$ on the maximal EE and the optimal $n$}
\begin{figure}
\centering
\includegraphics[width=0.5\textwidth]{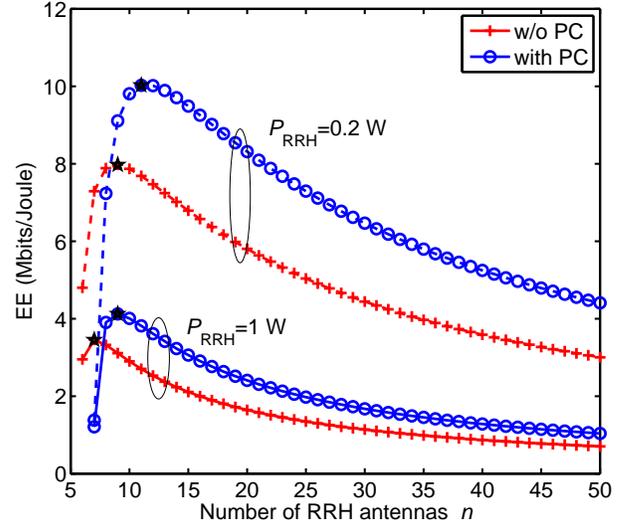}
\caption{Impact of PC and $P_{\text{RRH}}$ on the maximal EE and $n^\star$. $d=1$, $M=7$, $K=10$, and $\gamma=2$. As compared to the scenario without PC, $n^\star$ is larger for the case with PC. When the running power of each RRH antenna $P_{\text{RRH}}$ is lower, more antennas are required to achieve a higher maximal EE.}\label{EE_PBS}
\end{figure}

The impact of PC and $P_{\text{RRH}}$ on $n^\star$ and the maximal EE are investigated in Fig.~\ref{EE_PBS}. Here, we compare the EE of massive DAS with parameter in the year 2011 and the predicated value in 2020, which are respectively  $P_{\text{RRH}}= 1$ W and $P_{\text{RRH}}= 0.2$ W \cite{Kumar-11WCVT,Desset-14GreenComm}. As pointed out in Corollary \ref{Corollary: the optimal n without PC}, $n^\star$ will be larger when there exists PC. For the impact of $P_{\text{RRH}}$, we can see that when $P_{\text{RRH}}=1$ W, the maximal EE is degraded severely, and $n^\star$ is almost the same as the minimum number of antennas required to achieve the average uniform rate $\gamma=2$. Therefore, if the hardware components of RRH antennas are power inefficient, it is not wise to deploy a large number of antennas from the viewpoint of EE.

\subsection{Impact of inter-cell interference on the maximal EE and the optimal $n$}
\begin{figure}
\centering
\includegraphics[width=0.5\textwidth]{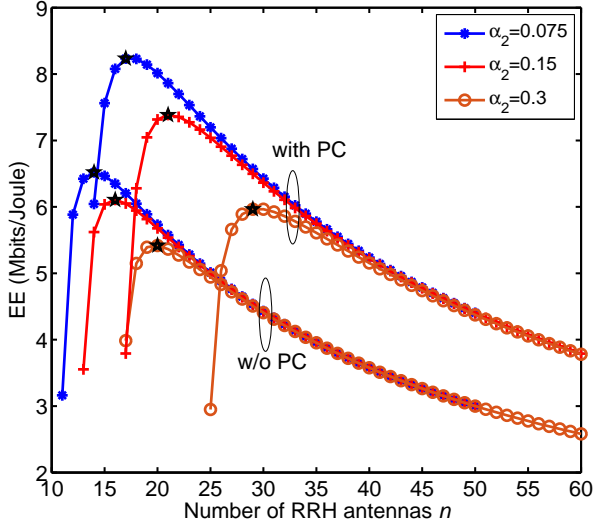}
\caption{Energy efficiency versus the number of RRH antennas for different values of inter-cell interference factor $\alpha_2$ with and without PC. $d=2$, $M=7$, $K=10$, and $\gamma=2$. $n^\star$ increases with $\alpha_2$. As compared to the case without PC, the impact of $\alpha_2$ on $n ^\star$ is more obvious for the case with PC.}\label{EE_alpha2}
\end{figure}

Fig.~\ref{EE_alpha2} shows the set of EE values with and without PC for different values of inter-cell interference factor $\alpha_2$. For the case with PC, when $\alpha_2$ is set to be 0.075, 0.15, and 0.3, $n^\star$ increases from 17 to 21 and 29, respectively. However, the increase of $n^\star$ for the case without PC is not obvious when compared to that with PC scenario. This happens because when $\alpha_2$ increases, both the interference due to PC and the uncorrelated multi-user interference increases, and the effect of PC becomes more serious when $n$ becomes larger.

\subsection{The trade-off between EE and average uniform rate $\gamma$}
\begin{figure}
\centering
\includegraphics[width=0.5\textwidth]{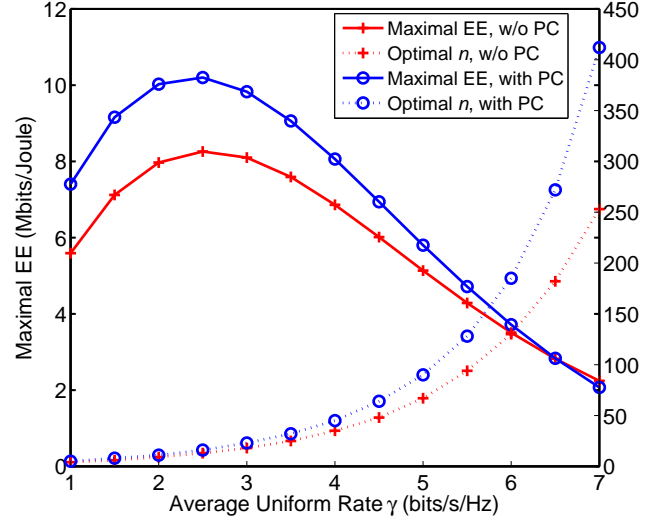}
\caption{Maximal energy efficiency and the corresponding optimal $n$ versus the average uniform rate $\gamma$ with and without PC. $d=1$, $K=10$, and $M=7$.}\label{EE_gamma}
\end{figure}

In Fig.~\ref{EE_gamma}, both the maximal EE and the corresponding $n^\star$ are displayed as a function of $\gamma$ when $M$ is fixed to 7.
We observe that when $\gamma$ is not large, the maximal EE and $\gamma$ can simultaneously increase, but when $\gamma$ is larger than a value, the maximal EE decreases inversely. This is because when $\gamma$ is increasing, the required number of antennas $n$ increases accordingly. And when the proportion of the increase of the user rate is less than that of the increased power to run the RRH antennas, the EE decreases. We also note that to achieve the maximal EE, $n^\star$ increases faster with $\gamma$ for the case with PC, and thus the EE also decreases faster.

\subsection{Impact of PC and channel correlation on the maximal EE and the optimal $K$}
\begin{figure}
\centering
\includegraphics[width=0.5\textwidth]{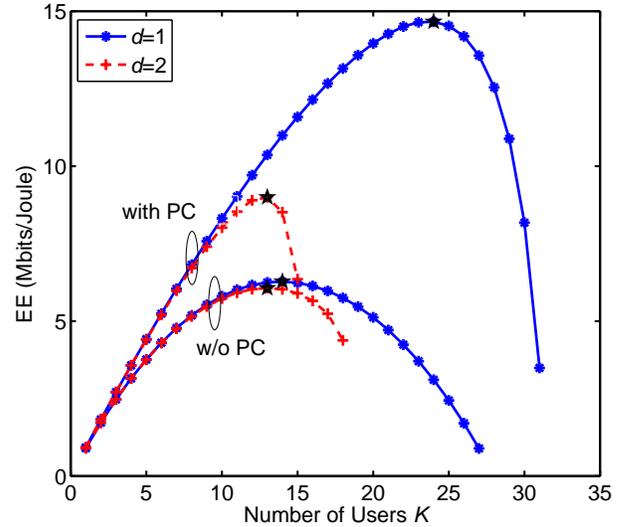}
\caption{Impact of PC and channel correlation on the maximal EE and $K^\star$. $n=20$, $M=7$, and $\gamma=2$. When the channel correlation is absent ($d=1$), more users can be served to maximize the EE.}\label{EE_K}
\end{figure}

Fig.~\ref{EE_K} illustrates the EE versus the number of users for $d=1$, $d=2$, with and without PC, respectively. $M=7$ RRHs are deployed in each cell, and the antenna number of each RRH is fixed at 20. The figure shows that when $d=1$, the maximal values of EE for the case with and without PC are obtained at $K=24$ and $K=14$, respectively. When $d=2$, the maximal EE are obtained at $K=13$, which are consistent with the results of using bisection method in Theorem \ref{Theorem:optimal K} (marked with $\star$). When $d=1$, the optimal $K$ to maximize EE for the case without PC is less than that with PC, this is because for the case without PC, if a larger number of users are served, in per coherence interval, the length of uplink pilot sequence $\tau=KL$ will be large, and less symbols can be used for downlink data transmission, which degrades the SE and EE.

\subsection{Impact of $K$ on the maximal EE and the optimal $M$}
\begin{figure}
\centering
\includegraphics[width=0.5\textwidth]{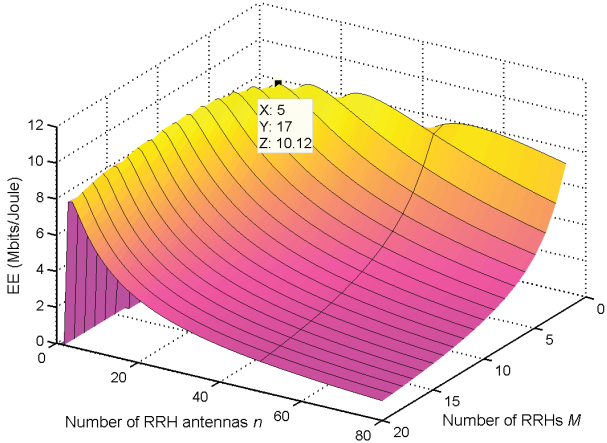}
\caption{EE with the numbers of RRHs $M$ and RRH antennas $n$. $K=10$, $\psi=1$, $d=1$ and $\gamma=2$. The optimal EE 10.12 Mbits/J is obtained at $(M,n)=(5,17)$.}\label{EE_M_3D}
\end{figure}

\begin{figure}
\centering
\includegraphics[width=0.5\textwidth]{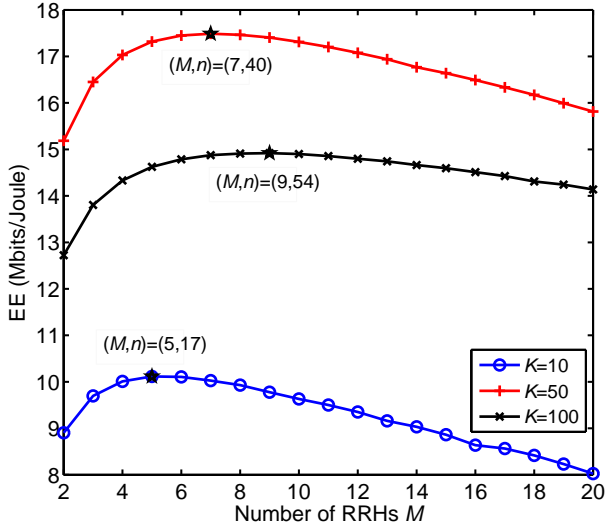}
\caption{Impact of $K$ on the maximal EE and $M^\star$. $\psi=1$, $d=1$ and $\gamma=2$. With more users, more antennas and RRHs should be deployed to maximize the EE.}\label{EE_M}
\end{figure}

Fig. \ref{EE_M_3D} shows the achievable EE with different numbers of RRHs $M$ and RRH antennas $n$ when $K=10$. The figure shows that the optimal EE 10.12 Mbits/J is achieved at $(M,n)=(5,17)$. We then consider the relationship between EE and $(M,n)$ for two other numbers of users, i.e., medium users ($K=50$) and a large number of users ($K=100$). The 3D graphs for this two cases are similar to Fig. \ref{EE_M_3D} and are not shown here. The optimal EE versus the number of RRHs $M$ for the three cases of users are presented in Fig.~\ref{EE_M}. Each point uses the EE-optimal value of $n$. The optimal EE are obtained at $(M,n)=(5,17)$, $(7,40)$, and $(9,54)$ for $K=$10, 50, and 100, respectively. We notice that with more users, more antennas and RRHs should be deployed to maximize the EE.

\subsection{EE comparison between massive DAS and massive CAS}
\begin{figure}
\centering
\includegraphics[width=0.5\textwidth]{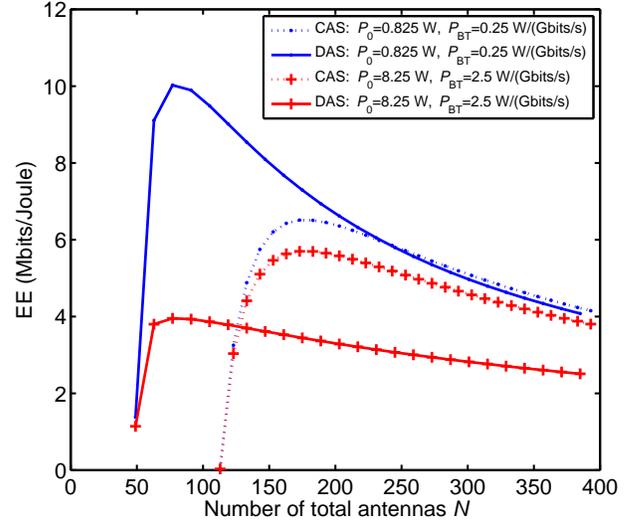}
\caption{EE comparison between DAS ($M=7$) and CAS ($M=1$) under different power consumption of backhaul. $K=10$, $\psi=1$, $d=1$, and $\gamma=2$.}\label{EE_Compare}
\end{figure}

Finally, Fig.~\ref{EE_Compare} shows the EE comparison between massive DAS ($M=7$) and massive CAS ($M=1$) under different consumption of backhauling powers. The solid lines indicate the EE performance of massive DAS, and the dotted lines indicate the EE of massive CAS. As defined in Section II, the backhauling power is modeled as $M (P_0 + R B P_{\text{BT}})$. We first set $P_0=0.825$ W, $P_{\text{BT}}=0.25$ W/(Gbits/s), and then we change these parameters to $P_0=8.25$ W, $P_{\text{BT}}=2.5$ W/(Gbits/s). We observe that when $P_0=0.825$ W, $P_{\text{BT}}=0.25$ W/(Gbits/s), massive DAS is more energy efficient than massive CAS, and vice versa as in the case of $P_0=8.25$ W, $P_{\text{BT}}=2.5$ W/(Gbits/s). The reason is that in DAS, the average distance between the RRH and users is decreased, and thus the transmit power is less. It is also shown that to achieve the maximal EE, the optimal number of total antennas $N=mM$ of DAS is less than that of CAS, so the power to run the total antennas ($NP_{\text{RRH}}$) is decreased. In DAS, more power is consumed for backhauling, if the backhaul links are power efficient, massive DAS can achieve higher EE than CAS. However, if the backhauling power is large, massive CAS will be more energy efficient than massive DAS, because a significant increase of the total power consumption is used for backhauling, which decreases the EE of massive DAS.

\section{Conclusion}
In this paper, under a realistic power consumption model, we have investigated the problem of maximizing the EE of a downlink multi-cell massive DAS, with respect to the number of RRH antennas $n$, the number of RRHs $M$, and the number of served users $K$.
Our study provided an efficient tool to help the system designer in deciding the optimal $n$, $M$, and $K$ that achieving the optimal EE.
Simulation results validated our analysis, and demonstrated that the DAS is always more energy efficient than CAS, unless the backhauling power is large. In addition, more RRHs and antennas should be used to achieve the optimal EE when the number of users is increased. While having more antennas may lead to higher PC, we show that for a system with PC, to achieve the optimal EE, more antennas are needed when compared to that of the system without PC.

\section*{Appendix}
\subsection{Proof of Corollary \ref{Corollary: SINR of simplified model}}
Under the simplified channel model, we have
\begin{equation}\label{eq: R}
 \qR_{lmjk}=\beta_{lmjk} \frac{n}{P}\qA \qA^H.
\end{equation}
Based on \eqref{eq: R}, we have
\begin{align}
\qQ_{lmjk}&= \big( \frac{\sigma^2}{p_u \tu} \qI_{n} + \sum\limits_{j \in \calL_j} \qR_{lmjk}\big)^{-1} \nonumber \\
&=\begin{cases}
\big( \frac{\sigma^2}{p_u \tu} \qI_{n} + \bar L_1\beta \frac{n}{P}\qA \qA^H \big)^{-1}, &\text{if} \quad m=\bar m_{lk}, \\
\big( \frac{\sigma^2}{p_u \tu} \qI_{n} + \bar L_2\beta \frac{n}{P}\qA \qA^H \big)^{-1}, &\text{if} \quad m \neq \bar m_{lk},
\end{cases}
\end{align}
where $\bar L_1=M^ \frac{\iota}{2}+\alpha_2(L/\psi-1)$, and $\bar L_2=\alpha_1+\alpha_2(L/\psi-1)$.

Using matrix inversion lemma $\qP(\qI+\qW \qP)^{-1}=(\qI+ \qP \qW)^{-1}\qP$, and the fact that $\qA^H \qA=\qI_P$, when $l \neq j$, we have
\begin{align}
 \qPhi_{lmjk}&=\qR_{lmlk} \qQ_{lmjk} \qR_{lmjk} \nonumber \\
&=\begin{cases}
M^ \frac{\iota}{2} \alpha_2 \beta^2 d \nu_1 \qA \qA^H, &\text{if} \quad m=\bar m_{lk}, \\
\alpha_1 \alpha_2 \beta^2 d \nu_2 \qA \qA^H, &\text{if} \quad m \neq \bar m_{lk},
\end{cases}
\end{align}
with $\nu_1=p_u \tau_u d / (\sigma^2+p_u \tau_u \bar L_1 \beta d)$, and $\nu_2=p_u \tau_u d / (\sigma^2+p_u \tau_u \bar L_2 \beta d)$.

Similarly, when $l=j$, we have
\begin{align}
 \qPhi_{jmjk}&=\qR_{jmjk} \qQ_{jmjk} \qR_{jmjk} \nonumber \\
&=\begin{cases}
M^\iota \beta^2 d \nu_1 \qA \qA^H, &\text{if} \quad m=\bar m_{jk}, \\
\alpha_1^2 \beta^2 d \nu_2 \qA \qA^H, &\text{if} \quad m \neq \bar m_{jk}.
\end{cases}
\end{align}
Since $\tr \{\qA \qA^H\}=\tr \{ \qA^H \qA\}=P$, the power of the desired signal can be derived as
\begin{align}
  S&=\bar \lambda_j \left( \frac{1}{n} \sum\limits_{m=1}^M \tr \qPhi_{jmjk}\right)^2 \nonumber \\
  &=\frac{1}{n} \sum\limits_{m=1}^M \tr \qPhi_{jmjk} \nonumber \\
  &=\beta^2 \left(M^\iota \nu_1+(M-1) \alpha_1^2 \nu_2 \right).
\end{align}
 The power of interference due to PC, and multiuser interference can be derived as follows.
\begin{align}
 I_{PC}&=\sum\limits_{l \neq j} \bar \lambda_l \left| \frac{1}{n} \sum\limits_{m=1}^M \tr \qPhi_{lmjk}\right|^2 \nonumber \\
 &=\beta^2 \alpha_2 \left( \bar L_1 -M^ \frac{\iota}{2} \right) \frac{\left(M^ \frac{\iota}{2}\nu_1+(M-1) \alpha_1 \nu_2 \right)^2}{\left(M^\iota \nu_1+(M-1) \alpha_1^2 \nu_2 \right)}.
\end{align}
\begin{align}
I_{MU}=& \frac{\beta d K}{n} \left( M^{\frac{\iota}{2}-1}+(1-\frac{1}{M}) \alpha_1 + \alpha_2 \left(L -1 \right)\right).
\end{align}

\subsection{Proof of Theorem \ref{Theorem: the optimal n}}
Plugging \eqref{eq:pd} into the expression of $P_{\text{Total}}$ in \eqref{eq: EE of simplied model}, the optimization problem \eqref{eq: EE equivalent optimization} can be expressed as
\begin{align}\label{eq: EE optimization problem fn}
\min\limits_{n} \quad & f(n), \\
\mbox{s.t.} \quad & \eqref{eq:minimum n}, \ n \in \mathbb Z_+,\nonumber
\end{align}
where $f(n)=n M P_{\text{RRH}}+ K \frac{T-\tau_u}{T\zeta} \frac{\sigma^2}{n \left( \frac{S}{2^\gamma -1}- I_{PC}\right)-I_{MU}'}$.

When $P_{\text{RRH}}$ and $p_d$ are positive, the two items at the right-hand side of $f(n)$ are both positive. From mean value equalities, $Ax+\frac{B}{x-C} \geq AC+2 \sqrt{AB}$, if $A$, $B$ and $x-C$ are positive, and the equality holds only when $x=C+\sqrt{\frac{B}{A}}$. Based on this, the optimal $n^\circ$ that minimize $f(n)$ is found to be
\begin{equation}
n^\circ=\sqrt{\frac{\frac{T-\tau_u}{T \zeta} \sigma^2 K}{ \left( \frac{S}{2^\gamma -1}-I_{PC} \right) M P_{\text{RRH}}}}+ \frac{I_{MU}'}{\frac{S}{2^\gamma -1}-I_{PC}}.
\end{equation}
It can be easily found that the first-order derivative of $f(n)$ is increasing for $n\in (n^\circ, \infty) $, and decreasing for $n \in (\frac{I_{MU}'}{\frac{S}{2^\gamma -1}-I_{PC}}, n^\circ]$. Therefore, $f(n)$ is a strictly quasi-convex function. Since the number of antennas is a positive integer, the quasi-convexity of $f(n)$ implies that $n^\star$ is the closest integer smaller or larger than $n^ \circ$, which is determined by comparing the EE achieved by the two closest integers. Thus, the proof is completed.

\subsection{Proof of Theorem \ref{Theorem:optimal K}}
We first consider the first-order derivative of $\frac{1}{\eta}$.
\begin{equation}
\frac{\partial}{\partial K}\frac{1}{\eta}=\frac{z(K)}{\big( (T-K \psi) K \big)^2 \left( \mu_1-d \beta K \xi \right)^2},
\end{equation}
where
\begin{equation}\label{eq: the derivate of zk}
z(K)=\mu_2 (2 K \psi -T) \left( \mu_1-d \beta \xi K \right)^2 + \frac{\sigma^2}{\zeta \gamma} d \beta \xi \big( (T-K \psi) K \big)^2.
\end{equation}
Since the length of the pilot $\psi K <T$ and the transmit power $p_d > 0$, $K$ should satisfy the constraint as $0< K < \min\{\frac{T}{\psi},  \frac{\mu_1} {d \beta \xi}\}$. From \eqref{eq: the derivate of zk}, we know that the sign of $\frac{\partial}{\partial K}\frac{1}{\eta}$ is the same as that of $z(K)$, and thus we consider $z(K)$ to characterize the shape of $1/\eta$. When $K \rightarrow 0$, $z(K)$ approaches to a negative value as
\begin{equation}
\lim_{K \rightarrow 0} z(K) = -\mu_2 T \mu_1^2.
\end{equation}
If $\frac{T}{\psi} < \frac{\mu_1} {d \beta \xi}$, when $K \rightarrow \frac{T}{\psi}$, $z(K)$ approaches to a positive value as

\begin{equation}
\lim_{K \rightarrow \frac{T}{\psi}} z(K) = \mu_2 T \left( \mu_1-d \beta \xi  \frac{T}{\psi} \right)^2.
\end{equation}

Similarly, if $ \frac{\mu_1} {d \beta \xi} < \frac{T}{\psi}$, when $K \rightarrow \frac{\mu_1} {d \beta \xi}$, $z(K)$ also approaches to a positive value.  By calculating, the first-order derivative of $z(K)$ is positive, which implies that there is a unique $K^\circ$ such that $z(K^\circ)=0$. Since the sign of $z(K)$ is equal to that of $\frac{\partial}{\partial K}\frac{1}{\eta}$, we know that $1/ \eta$ is decreasing for $K \in (0, K^\circ)$ and increasing for $K \in (K^\circ, \min\{\frac{T}{\psi},  \frac{\mu_1} {d \beta \xi}\})$. Therefore, $1/ \eta$ is quasi-convex in the range $[0,\min\{\frac{T}{\psi}, \frac{\mu_1} {d \beta \xi}\}]$, and get the minimum value when $K=K^\circ$, or $\eta$ is maximal when $K=K^\circ$, which yields the result of Theorem \ref{Theorem:optimal K}.
\bibliographystyle{IEEEtran}

\begin{thebibliography}{10}
\providecommand{\url}[1]{#1}
\csname url@rmstyle\endcsname
\providecommand{\newblock}{\relax}
\providecommand{\bibinfo}[2]{#2}
\providecommand\BIBentrySTDinterwordspacing{\spaceskip=0pt\relax}
\providecommand\BIBentryALTinterwordstretchfactor{4}
\providecommand\BIBentryALTinterwordspacing{\spaceskip=\fontdimen2\font plus
\BIBentryALTinterwordstretchfactor\fontdimen3\font minus
  \fontdimen4\font\relax}
\providecommand\BIBforeignlanguage[2]{{%
\expandafter\ifx\csname l@#1\endcsname\relax
\typeout{** WARNING: IEEEtran.bst: No hyphenation pattern has been}%
\typeout{** loaded for the language `#1'. Using the pattern for}%
\typeout{** the default language instead.}%
\else
\language=\csname l@#1\endcsname
\fi
#2}}

\bibitem{Tombaz-11WCM}
{S. Tombaz, A. Vastberg, and J. Zander}, ``{Energy-and cost-efficient
  ultra-high-capacity wireless access},'' \emph{IEEE Wireless Commun. Mag.},
  vol.~18, no.~5, pp. 18--24, Oct. 2011.

\bibitem{Andrews-14JSAC}
{J. G. Andrews, S. Buzzi, W. Choi, S. V. Hanly, A. Lozano, A. C. K. Soong, and
  J. C. Zhang}, ``{What will 5G be?}'' \emph{IEEE J. Sel. Areas Commun.},
  vol.~32, no.~6, pp. 1065--1082, June 2014.

\bibitem{Marzetta-10TWC}
{T. L. Marzetta}, ``{Noncooperative cellular wireless with unlimited numbers of
  base station antennas},'' \emph{IEEE Trans. Wireless Commun.}, vol.~9,
  no.~11, pp. 3590--3600, Nov. 2010.

\bibitem{Rusek-13SPM}
{F. Rusek, D. Persson, B. K. Lau, E. G. Larsson, T. L. Marzetta, O. Edfors, and
  F. Tufvesson}, ``{Scaling up MIMO: Opportunities and challenges with very
  large arrays},'' \emph{IEEE Sig. Proc. Mag.}, vol.~30, no.~1, pp. 40--60,
  Jan. 2013.

\bibitem{Suraweera-13ICC}
{H. A. Suraweera, H. Q. Ngo, T. Q. Duong, C. Yuen, and E. G. Larsson},
  ``{Multi-pair amplify-and-forward relaying with very large antenna arrays},''
  in \emph{Proc. IEEE Int. Conf. on Commun. (ICC)}, Budapest, Hungary, June
  2013, pp. 4635--4640.

\bibitem{Hoydis-13JSAC}
{J. Hoydis, S. ten Brink, and M. Debbah}, ``{Massive MIMO in the UL/DL of
  cellular networks: How many antennas do we need?}'' \emph{IEEE J. Sel. Areas
  Commun.}, vol.~31, no.~2, pp. 160--171, Feb. 2013.

\bibitem{ZhangJun-13JSAC}
{J. Zhang, C.-K. Wen, S. Jin, X. Q. Gao, and K.-K. Wong}, ``{On capacity of
  large-scale MIMO multiple access channels with distributed sets of correlated
  antennas},'' \emph{IEEE J. Sel. Areas Commun.}, vol.~31, no.~2, pp. 133--148,
  Feb. 2013.

\bibitem{Larsson-14ComM}
{E. G. Larsson, O. Edfors, F. Tufvesson, and T. L. Marzetta}, ``{Massive MIMO
  for next generation wireless systems},'' \emph{IEEE Commun. Mag.}, vol.~52,
  no.~2, pp. 186--195, Feb. 2014.

\bibitem{Lu-14JSTSP}
{L. Lu, G. Y. Li, A. L. Swindlehurst, A. Ashikhmin, and R. Zhang}, ``{An
  overview of massive MIMO: Benefits and challenges},'' \emph{IEEE J. Sel.
  Topics Sig. Proc.}, vol.~8, no.~5, pp. 742--758, Oct. 2014.

\bibitem{Sanguinetti-15JSAC}
{L. Sanguinetti, A. Moustakas, and M. Debbah}, ``{Interference Management in 5G
  Reverse TDD HetNets with Wireless Backhaul: A Large System Analysis},''
  \emph{IEEE J. Sel. Areas Commun.}, vol.~33, no.~6, pp. 1187--1200, June 2015.

\bibitem{Sadeghi-15GC}
{M. Sadeghi, C. Yuen, and Y. H. Chew}, ``{Multi-cell multi-group massive MIMO
  multicasting: An asymptotic analysis},'' in \emph{Proc. IEEE Global Commun.
  Conf. (GLOBECOM)}, San Diego, CA, Dec. 2015, pp. 1--6.

\bibitem{ZhangJun-15TWC}
{J. Zhang, C.-K. Wen, C. Yuen, S. Jin, and X. Q. Gao}, ``{Large system analysis
  of cognitive radio network via partially-projected regularized zero-forcing
  precoding},'' \emph{IEEE Trans. Wireless Commun.}, vol.~14, no.~9, pp.
  4934--4947, Sep. 2015.

\bibitem{Yang-13JSAC}
{H. Yang and T. L. Marzetta}, ``{Performance of conjugate and zero-forcing
  beamforming in large-scale antenna systems},'' \emph{IEEE J. Sel. Areas
  Commun.}, vol.~31, no.~2, pp. 172--179, Feb. 2013.

\bibitem{Gao-15CL}
{Z. Gao, L. Dai, C. Yuen, and Z. Wang}, ``{Asymptotic orthogonality analysis of
  time-domain sparse massive MIMO channels},'' \emph{IEEE Commun. Letters},
  vol.~19, no.~10, pp. 1826--1829, Oct. 2015.

\bibitem{Chen-15TWC}
{ X. Chen, L. Lei, H. Zhang, and C. Yuen}, ``{Large-scale MIMO relaying
  techniques for physical layer security: AF or DF?}'' \emph{IEEE Trans.
  Wireless Commun.}, Sep. 2015.

\bibitem{Fernandes-13JSAC}
{F. Fernandes, A. Ashikhmin, and T. L. Marzetta}, ``{Inter-Cell Interference in
  Noncooperative TDD Large Scale Antenna Systems},'' \emph{IEEE J. Sel. Areas
  Commun.}, vol.~31, no.~2, pp. 192--201, Feb. 2013.

\bibitem{Liu-14TSP}
{A. Liu and V. N. Lau}, ``{Joint power and antenna selection optimization in
  large cloud radio access networks},'' \emph{IEEE Trans. Sig. Proc.}, vol.~62,
  no.~5, pp. 1319--1328, Mar. 2014.

\bibitem{Joung-14JSTSP}
{J. Joung, Y. K. Chia, and S. Sun}, ``{Energy-efficient, large-scale
  distributed-antenna system (L-DAS) for multiple users},'' \emph{IEEE J. Sel.
  Topics Sig. Proc.}, vol.~8, no.~5, pp. 954--965, Oct. 2014.

\bibitem{LinYicheng-14TWC}
{Y. C. Lin and W. Yu}, ``{Downlink spectral efficiency of distributed antenna
  systems under a stochastic model},'' \emph{IEEE Trans. Wireless Commun.},
  vol.~13, no.~12, pp. 6891--6902, Dec. 2014.

\bibitem{Truong-13AC}
{K. T. Truong and R. W. Heath}, ``{The viability of distributed antennas for
  massive MIMO systems},'' in \emph{Proc. the Asilomar Conf. on Sig., Systems,
  and Computers}, Pacific Grove, CA, Nov. 2013, pp. 3--6.

\bibitem{Onireti-13TCOM}
{O. Onireti, F. Heliot, and M. A. Imran}, ``{On the energy efficiency-spectral
  efficiency trade-off of distributed MIMO systems},'' \emph{IEEE Trans.
  Commun.}, vol.~61, no.~9, pp. 3741--3753, Sep. 2013.

\bibitem{Wangdongming-13JSAC}
{D. M. Wang, J. Z. Wang, X. H. You, Y. Wang, M. Chen, and X. Y. Hou},
  ``{Spectral efficiency of distributed MIMO systems},'' \emph{IEEE J. Sel.
  Areas Commun.}, vol.~31, no.~10, pp. 2112--2127, Oct. 2013.

\bibitem{Sun-15IET}
{Q. Sun, S. Jin, J. Wang, Y. Zhang, X. Q. Gao, and K.-K. Wong}, ``{Downlink
  massive distributed antenna systems scheduling},'' \emph{IET Commun.},
  vol.~9, no.~7, pp. 1006--1016, May 2015.

\bibitem{Heath-11TSP}
{R. W. Heath, T. Wu, Y. H. Kwon, and A. Soong}, ``{Multiuser MIMO in
  distributed antenna systems with out-of-cell interference},'' \emph{IEEE
  Trans. Sig. Proc.}, vol.~59, no.~10, pp. 4885--4899, Oct. 2011.

\bibitem{Dailin-14TWC}
{L. Dai}, ``{An uplink capacity analysis of the distributed antenna system
  (DAS): From cellular DAS to DAS with virtual cells},'' \emph{IEEE Trans.
  Wireless Commun.}, vol.~13, no.~5, pp. 2717--2731, May 2014.

\bibitem{Ngo-13TCOM}
{H. Q. Ngo, E. G. Larsson, and T. L. Marzetta}, ``{Energy and spectral
  efficiency of very large multiuser MIMO systems},'' \emph{IEEE Trans.
  Commun.}, vol.~61, no.~4, pp. 1436--1449, Apr. 2013.

\bibitem{Liu-15arXiv}
\BIBentryALTinterwordspacing
{W. J. Liu, S. Q. Han, and C. Y. Yang}, ``{Is massive MIMO energy efficient?}''
  [Online]. Available: \url{http://arxiv.org/abs/1505.07187}
\BIBentrySTDinterwordspacing

\bibitem{EmilB-15TWC}
{E. Bj\"{o}rnson, L. Sanguinetti, J. Hoydis, and M. Debbah}, ``{Optimal design
  of energy-efficient multi-user MIMO systems: Is massive MIMO the answer?}''
  \emph{IEEE Trans. Wireless Commun.}, vol.~14, no.~6, pp. 3059--3075, June
  2015.

\bibitem{Yang-15VTC}
{H. Yang and T. L. Marzetta}, ``{Energy efficient design of massive MIMO: How
  many antennas?}'' in \emph{Proc. Vehicular Technology Conf. (VTC Spring)},
  Glasgow, UK, May. 2015, pp. 1--5.

\bibitem{Mohammed-14TC}
{S. K. Mohammed}, ``{Impact of transceiver power consumption on the energy
  efficiency of zero-forcing detector in massive MIMO systems},'' \emph{IEEE
  Trans. Commun.}, vol.~62, no.~11, pp. 3874--3890, Nov. 2014.

\bibitem{Hechunglong-12VTC}
{C. L. He, B. Sheng, P. C. Zhu, and X. H. You}, ``{Energy efficient comparison
  between distributed MIMO and co-located MIMO in the uplink cellular
  systems},'' in \emph{Proc. Vehicular Technology Conf. (VTC Fall)}, Quebec
  City, QC, Sep. 2012, pp. 1--5.

\bibitem{Zhao-13VTC}
{L. Zhao, H. Zhao, F. L. Hu, K. Zheng, and J. X. Zhang}, ``{Energy efficient
  power allocation algorithm for downlink massive MIMO with MRT precoding},''
  in \emph{Proc. Vehicular Technology Conf. (VTC Fall)}, Las Vegas, NV, Sep.
  2013, pp. 1--5.

\bibitem{Chien-16arXiv}
\BIBentryALTinterwordspacing
{T. V. Chien, E. Bj\"{o}rnson, and E. G. Larsson}, ``{Joint power allocation
  and user association optimization for massive MIMO systems}.'' [Online].
  Available: \url{http://arxiv.org/abs/1601.02436}
\BIBentrySTDinterwordspacing

\bibitem{Li-14WCNC}
{P. R. Li, T. S. Chang, and K. T. Feng}, ``{Energy-efficient power allocation
  for distributed large-scale MIMO cloud radio access networks},'' in
  \emph{Proc. Wireless Commun. and Networking Conf. (WCNC)}, Istanbul, Apr.
  2014, pp. 1856--1861.

\bibitem{Wagner-12IT}
{S. Wagner, R. Couillet, M. Debbah, and D. T. M. Slock}, ``{Large system
  analysis of linear precoding in correlated MISO broadcast channels under
  limited feedback},'' \emph{IEEE Trans. Info. Theory}, vol.~58, no.~7, pp.
  4509--4537, Jul. 2012.

\bibitem{ZhangJun-13TWC}
{J. Zhang, C.-K. Wen, S. Jin, X. Q. Gao, and K.-K. Wong}, ``{Large system
  analysis of cooperative multi-cell downlink transmission via regularized
  channel inversion with imperfect CSIT},'' \emph{IEEE Trans. Wireless
  Commun.}, vol.~12, no.~10, pp. 4801--4813, Oct. 2013.

\bibitem{Kay-93}
{S. M. Kay}, \emph{{Fundamentals of Statistical Signal Processing: Estimation
  Theory}}.\hskip 1em plus 0.5em minus 0.4em\relax Englewood Cliffs, NJ:
  Prentice Hall, 1993.

\bibitem{Marzetta-06ACSSC}
{T. L. Marzetta}, ``{How much training is required for multiuser MIMO?}'' in
  \emph{Proc. Asilomar Conf. Sig., Syst., Comput. (ACSSC)}, Pacific Grove, CA,
  Oct. 2006, pp. 359--363.

\bibitem{Jose-11TWC}
{J. Jose, A. Ashikhmin, T. L. Marzetta, and S. Vishwanath}, ``{Pilot
  contamination and precoding in multi-cell TDD systems},'' \emph{IEEE Trans.
  Wireless Commun.}, vol.~10, no.~8, pp. 2640--2651, Aug. 2011.

\bibitem{Ngo-11ICASSP}
{H. Q. Ngo, T. L. Marzetta, and E. G. Larsson}, ``{Analysis of the pilot
  contamination effect in very large multicell multiuser MIMO systems for
  physical channel models},'' in \emph{Proc. Acoustics, Speech and Sig. Proc.
  (ICASSP)}, Prague, Czech Republic, May 2011, pp. 3464--3467.

\bibitem{Hungerford-96}
{H. Hungerford}, \emph{{Abstract Algebra: An Introduction}}.\hskip 1em plus
  0.5em minus 0.4em\relax Brooks Cole, 1996.

\bibitem{Kumar-11WCVT}
{R. Kumar and J. Gurugubelli}, ``{How green the LTE technology can be?}'' in
  \emph{Proc. Int. Conf. on Wireless Commun., Veh. Techn., Inform. Theory and
  Aerosp. Electron. Syst. Techn.}, Chennai, Feb. 2014, pp. 1--5.

\bibitem{Desset-14GreenComm}
{C. Desset, B. Debaillie, and F. Louagie}, ``{Modeling the hardware power
  consumption of large scale antenna systems},'' in \emph{Invited at Proc. IEEE
  Online GreenComm}, Tucson, AZ, Nov. 2014, pp. 1--6.

\end{thebibliography}

\end{document}